\documentclass{article}
\usepackage{times}
\usepackage{graphicx} 
\usepackage{subfigure}

\usepackage{natbib}

\usepackage{algorithm}
\usepackage{algorithmic}

\usepackage{hyperref}



\usepackage[accepted]{icml2017}

\usepackage{amsmath,latexsym,amsbsy,amssymb, amsthm}
\usepackage{bbm}

\newtheorem{Theorem}{Theorem}[section]

\newtheorem{Proposition}{Proposition}[section]

\newtheorem{Assumption}{Assumption}[section]

\newtheorem{Lemma}{Lemma}[section]

\numberwithin{equation}{section}

\newcommand{\h}{\hspace*{.24in}}

\newcommand{\beginsupplement}{%
        \setcounter{table}{0}
        \renewcommand{\thetable}{S\arabic{table}}%
        \setcounter{figure}{0}
        \renewcommand{\thefigure}{S\arabic{figure}}%
     }

\icmltitlerunning{Probabilistic Path Hamiltonian Monte Carlo}

\begin{document}

\twocolumn[
\icmltitle{Probabilistic Path Hamiltonian Monte Carlo}



\icmlsetsymbol{equal}{*}

\begin{icmlauthorlist}
\icmlauthor{Vu Dinh}{equal,fredhutch}
\icmlauthor{Arman Bilge}{equal,fredhutch,UW}
\icmlauthor{Cheng Zhang}{equal,fredhutch}
\icmlauthor{Frederick A. Matsen~IV}{fredhutch}
\end{icmlauthorlist}

\icmlaffiliation{fredhutch}{Program in Computational Biology, Fred Hutchison Cancer Research Center, Seattle, WA, USA}
\icmlaffiliation{UW}{Department of Statistics, University of Washington, Seattle, WA, USA}

\icmlcorrespondingauthor{Frederick, A.~Matsen~IV}{matsen@fredhutch.org}

\icmlkeywords{Hamiltonian Monte Carlo, leapfrog algorithm, phylogenetics}

\vskip 0.3in
]



\printAffiliationsAndNotice{\icmlEqualContribution} 

\begin{abstract}

Hamiltonian Monte Carlo (HMC) is an efficient and effective means of sampling posterior distributions on Euclidean space, which has been extended to manifolds with boundary.
However, some applications require an extension to more general spaces.
For example, phylogenetic (evolutionary) trees are defined in terms of both a discrete graph and associated continuous parameters; although one can represent these aspects using a single connected space, this rather complex space is not suitable for existing HMC algorithms.
In this paper, we develop Probabilistic Path HMC (PPHMC) as a first step to sampling distributions on spaces with intricate combinatorial structure.
We define PPHMC on orthant complexes, show that the resulting Markov chain is ergodic, and provide a promising implementation for the case of phylogenetic trees in open-source software.
We also show that a surrogate function to ease the transition across a boundary on which the log-posterior has discontinuous derivatives can greatly improve efficiency.
\end{abstract}

\section{Introduction}

Hamiltonian Monte Carlo is a powerful sampling algorithm which has been shown to outperform many existing MCMC algorithms, especially in problems with high-dimensional and correlated distributions \citep{duane87,neal2011mcmc}.
The algorithm mimics the movement of a body balancing potential and kinetic energy by extending the state space to include auxiliary momentum variables and using Hamiltonian dynamics.
By traversing long iso-probability contours in this extended state space, HMC is able to move long distances in state space in a single update step, and thus has proved to be more effective than standard MCMC methods in a variety of applications.
The method has gained a lot of interest from the scientific community and since then has been extended to tackle the problem of sampling on various geometric structures such as constrained spaces \citep{lan2014spherical, brubaker2012family, hartmann2005constrained}, on general Hilbert space \citep{beskos2011hybrid} and on Riemannian manifolds \citep{girolami2011riemann,wang2013adaptive}.

However, these extensions are not yet sufficient to apply to all sampling problems, such as in \emph{phylogenetics}, the inference of evolutionary trees.
Phylogenetics is the study of the evolutionary history and relationships among individuals or groups of organisms.
In its statistical formulation it is an inference problem on hypotheses of shared history based on observed heritable traits under a model of evolution.
Phylogenetics is an essential tool for understanding biological systems and is an important discipline of computational biology.
The Bayesian paradigm is now commonly used to assess support for inferred tree structures or to test hypotheses that can be expressed in phylogenetic terms \cite{Huelsenbeck2001-zt}.

Although the last several decades have seen an explosion of advanced methods for sampling from Bayesian posterior distributions, including HMC, phylogenetics still uses relatively classical Markov chain Monte Carlo (MCMC) based methods.
This is in part because the number of possible tree topologies (the labeled graphs describing the branching structure of the evolutionary history) explodes combinatorially as the number of species increases.
Also, to represent the phylogenetic relation among a fixed number of species, one needs to specify both the tree topology (a discrete object) and the branch lengths (continuous distances).
This composite structure has thus far limited sampling methods to relatively classical Markov chain Monte Carlo (MCMC) based methods.
One path forward is to use a construction of the set of phylogenetic trees as a single connected space composed of Euclidean spaces glued together in a combinatorial fashion \citep{kim2000slicing, moulton2004peeling, billera2001geometry, Gavryushkin2016-zl} and try to define an HMC-like algorithm thereupon.

\begin{figure}[h!]
\centering
  \includegraphics[width=\linewidth]{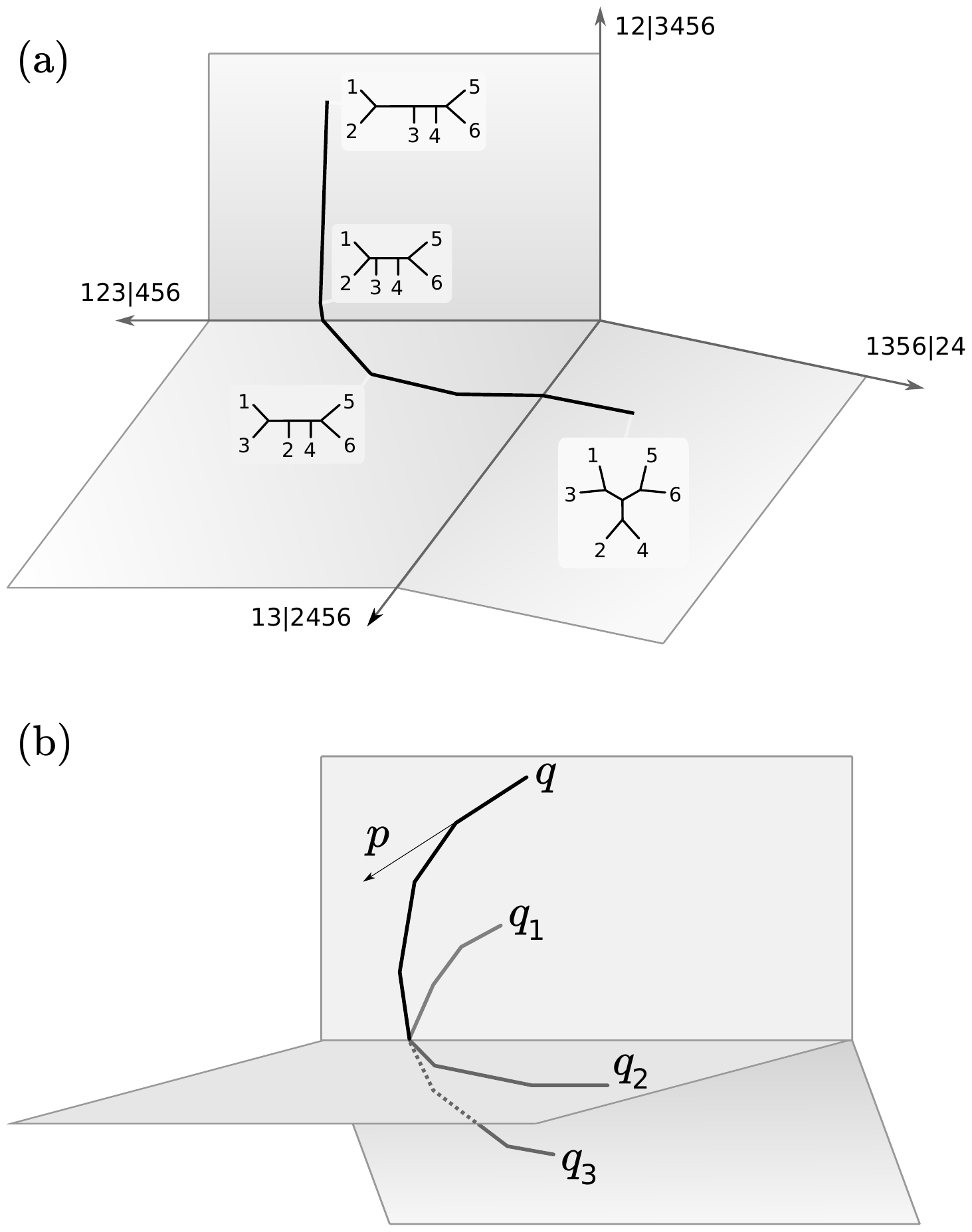}
  \caption{\
PPHMC on the orthant complex of tree space, in which each orthant (i.e.\ $\mathbb{R}_{\ge 0}^n$) represents the continuous branch length parameters of one tree topology.
PPHMC uses the leapfrog algorithm to approximate Hamiltonian dynamics on each orthant, but can move between tree topologies by crossing boundaries between orthants.
(a) A single PPHMC step moving through three topologies; each topology change along the path is one NNI move.
(b) Because three orthants meet at every top-dimensional boundary, the algorithm must make a choice as to which topology to select.
PPHMC uniformly selects a neighboring tree topology when the algorithm hits such a boundary.
Here we show three potential outcomes $q_1$, $q_2$ and $q_3$ of running a single step of PPHMC started at position $q$ with momentum $p$.
  }
  \label{fig:leapProg}
\end{figure}

Experts in HMC are acutely aware of the need to extend HMC to such spaces with intricate combinatorial structure: \citet{betancourt2017conceptual} describes the extension of HMC to discrete and tree spaces as a major outstanding challenge for the area.
However, there are several challenges to defining a continuous dynamics-based sampling methods on such spaces.
These tree spaces are composed of Euclidean components, one for each discrete tree topology, which are glued together in a way that respects natural similarities between trees.
These similarities dictate that more than two such Euclidean spaces should get glued together along a common lower-dimensional boundary.
The resulting lack of manifold structure poses a problem to the construction of an HMC sampling method on tree space, since up to now, HMC has just been defined on spaces with differential geometry.
Similarly, while the posterior function is smooth within each topology, the function's behavior may be very different between topologies.
In fact, there is no general notion of differentiability of the posterior function on the whole tree space and any scheme to approximate Hamiltonian dynamics needs to take this issue into account.

In this paper, we develop Probabilistic Path Hamiltonian Monte Carlo (PPHMC) as a first step to sampling distributions on spaces with intricate combinatorial structure (Figure~\ref{fig:leapProg}).
After reviewing how the ensemble of phylogenetic trees is naturally described as a geometric structure we identify as an orthant complex \cite{billera2001geometry}, we define PPHMC for sampling posteriors on orthant complexes along with a probabilistic version of the leapfrog algorithm.
This algorithm generalizes previous HMC algorithms by doing classical HMC on the Euclidean components of the orthant complex, but making random choices between alternative paths available at a boundary.
We establish that the integrator retains the good theoretical properties of Hamiltonian dynamics in classical settings, namely probabilistic equivalents of time-reversibility, volume preservation, and accessibility, which combined together result in a proof of ergodicity for the resulting Markov chain.
Although a direct application of the integrator to the phylogenetic posterior does work, we obtain significantly better performance by using a surrogate function near the boundary between topologies to control approximation error.
This approach also addresses a general problem in using Reflective HMC \citep[RHMC;][]{afshar2015reflection} for energy functions with discontinuous derivatives (for which the accuracy of RHMC is of order $\mathcal{O}(\epsilon)$, instead of the standard local error $\mathcal{O}(\epsilon^3)$ of HMC on $\mathbb{R}^n$).
We provide, validate, and benchmark two independent implementations in open-source software.

\section{Mathematical framework}

\subsection{Bayesian learning on phylogenetic tree space}

A \emph{phylogenetic tree} $(\tau, q)$ is a tree graph $\tau$ with $N$ leaves, each of which has a label, and such that each edge $e$ is associated with a non-negative number $q_e$.
Trees will be assumed to be bifurcating (internal nodes of degree 3) unless otherwise specified.
We denote the number of edges of such a tree by $n=2N-3$.
Any edge incident to a leaf is called a \emph{pendant edge}, and any other edge is called an \emph{internal edge}.
Let $\mathcal{T}_N$ be the set of all $N$-leaf phylogenetic trees for which the lengths of pendant edges are bounded from below by some constant $e_0 > 0$.
(This lower bound on branch lengths is a technical condition for theoretical development and can be relaxed in practice.)

We will use nearest neighbor interchange (NNI) moves \citep{robinson1971comparison} to formalize what tree topologies that are ``near'' to each other.
An NNI move is a transformation that collapses an internal edge to zero and then expands the resulting degree 4 vertex into an edge and two degree 3 vertices in a new way (Figure~\ref{fig:leapProg}a).
Two tree topologies $\tau_1$ and $\tau_2$ are called $\emph{adjacent topologies}$ if there exists a single NNI move that transforms $\tau_1$ into $\tau_2$.

We will parameterize $\mathcal{T}_N$ as Billera-Holmes-Vogtmann (BHV) space \citep{billera2001geometry}, which we describe as follows.
An \emph{orthant} of dimension $n$ is simply $\mathbb{R}_{\ge 0}^n$; each $n$-dimensional orthant is bounded by a collection of lower dimensional orthant \emph{faces}.
An \emph{orthant complex} is a geometric object $\mathcal{X}$ obtained by gluing various orthants of the same dimension $n$, indexed by a countable set $\Gamma$, such that: (i) the intersection of any two orthants is a face of both orthants, and (ii) each $x \in \mathcal{X}$ belongs to a finite number of orthants.
Each state of $\mathcal{X}$ is thus represented by a pair $(\tau, q)$, where $\tau \in \Gamma$ and $q \in \mathbb{R}_{\ge 0}^n$.
Generalizing the definitions from phylogenetics, we refer to $\tau$ as its \emph{topology} and to $q$ as the vector of \emph{attributes}.
The topology of a point in an orthant complex indexes discrete structure, while the attributes formalize the continuous components of the space.

For phylogenetics, the complex is constructed by taking one $n$-dimensional orthant for each of the $(2n-3)!!$ possible tree topologies, and gluing them together along their common faces.
The geometry can also be summarized as follows.
In BHV space, each of these orthants parameterizes the set of branch lengths for a single topology (as a technical point, because we are bounding pendant branch lengths below by $e_0$, we can take the corresponding entries in the orthant to parameterize the amount of branch length above $e_0$).
Top-dimensional orthants of the complex sharing a \emph{facet}, i.e.\ a codimension 1 face, correspond to (NNI) adjacent topologies.

For a fixed phylogenetic tree $(\tau,q)$, the phylogenetic likelihood is defined as follows and will be denoted by $L(\tau, p)$ \citep[see][for a full exposition]{Kenney2012-zj}.
Let $\psi = (\psi_1, \psi_2,...,\psi_S) \in \Omega^{N \times S}$ be the observed sequences (with characters in $\Omega$) of length $S$ over $N$ leaves.
The likelihood of observing $\psi$ given the tree has the form
\[
L(\tau, q) = \prod_{s=1}^S{\sum_{a^s}{\eta(a_{\rho}^s)\prod_{(u,v)\in E(\tau, q)}{P^{uv}_{a^s_ua^s_v}( q_{uv})}}}
\]
where $\rho$ is any internal node of the tree, $a^s$ ranges over all extensions of $\psi^s$ to the internal nodes of the tree, $a^s_u$ denotes the assigned state of node $u$ by $a^s$, $E(\tau, q)$ denotes the set of tree edges, $P_{ij}(t)$ denotes the transition probability from character $i$ to character $j$ across an edge of length $t$ defined by a given evolutionary model and $\eta$ is the stationary distribution of this evolutionary model.
For this paper we will assume the simplest \citet{JC69} model of a homogeneous continuous-time Markov chain on $\Omega$ with equal transition rates, noting that inferring parameters of complex substitution models is a vibrant yet separate subject of research \citep[e.g.][]{Zhao2016-sk}.

Given a proper prior distribution with density $\pi_0$ imposed on the branch lengths and on tree topologies, the posterior distribution can be computed as $\mathcal{P}(\tau, q) \propto L(\tau, q) \pi_0(\tau, q)$.

\subsection{Bayesian learning on orthant complexes}

With the motivation of phylogenetics in mind, we now describe how the phylogenetic problem sits as a specific case of a more general problem of Bayesian learning on orthant complexes, and distill the criteria needed to enable PPHMC on these spaces.
This generality will also enable applications of Bayesian learning on similar spaces in other settings.
For example in robotics, the state complex can be described by a cubical complexes whose vertices are the states, whose edges correspond to allowable moves, and whose cubes correspond to collections of moves which can be performed simultaneously \citep{ardila2014moving}.
Similarly, in learning on spaces of treelike shapes, the attributes are open curves translated to start at the origin, described by a fixed number of landmark points \citep{feragen2010geometries}.

An orthant complex, being a union of Euclidean orthants, naturally inherits the Lebesgue measure which we will denote hereafter by $\mu$.
Orthant complexes are typically not manifolds, thus to ensure consistency in movements across orthants, we assume that local coordinates of the orthants are defined in such a way that there is a natural one-to-one correspondence between the sets of attributes of any two orthants sharing a common face.

\begin{Assumption}[Consistency of local coordinates]
\label{as-coordinate}
Given two topologies $\tau, \tau' \in \Gamma$ and state $x = (\tau, q_{\tau}) = (\tau', q_{\tau'})$ on the boundary of the orthants for $\tau$ and $\tau'$, then $q_{\tau}= q_{\tau'}$.
\end{Assumption}

We show that BHV tree space can be given such coordinates in the Appendix.
For the rest of the paper, we define for each state $(\tau, q) \in \mathcal{X}$ the set $\mathcal{N}(\tau, q)$ of all \emph{neighboring topologies} $\tau'$ such that $\tau'$ orthant contains $(\tau, q)$.
Note that $\mathcal{N}(\tau, q)$ always includes $\tau$, and if all coordinates of $q$ are positive, $\mathcal{N}(\tau, q)$ is exactly $\{\tau\}$.
Moreover, if $\tau' \in \mathcal{N}(\tau, q)$ and $\tau' \ne \tau$, we say that $\tau$ and $\tau'$ are \emph{joined by} $(\tau,q)$.
If the intersection of orthants for two topologies is a facet of each, we say that the two topologies are \emph{adjacent}.

Finally, let $\mathcal{G}$ be the adjacency graph of the orthant complex $\mathcal{X}$, that is, the graph with vertices representing the topologies and edges connecting adjacent topologies.
Recalling that the diameter of a graph is the maximum value of the graph distance between any two vertices, we assume that
\begin{Assumption}
\label{as-diam}
The adjacency graph $\mathcal{G}$ of $\mathcal{X}$ has finite diameter, hereafter denoted by $k$.
\end{Assumption}

For phylogenetics, $k$ is of order $\mathcal{O}(N \log N)$ \cite{Li1996-kc}.

We seek to sample from a posterior distribution $\mathcal{P}(\tau, q)$ on $\mathcal{X}$.
Assume that the negative logarithm of the posterior distribution $U(\tau, q) :=-\log P(\tau, q)$ satisfies:

\begin{Assumption}
\label{as-potential}
$U(\tau, q)$ is a continuous function on $\mathcal{X}$, and is smooth up to the boundary of each orthant $\tau \in \Gamma$.
\end{Assumption}

In the Appendix, we prove that if the logarithm of the phylogenetic prior distribution $\pi_0(\tau, q)$ satisfies Assumption $\ref{as-potential}$, then so does the phylogenetic posterior distribution.
It is also worth noting that while $U(\tau, q)$ is smooth within each orthant, the function's behavior may be very different between orthants and we do not assume any notion of differentiability of the posterior function on the whole space.

\subsection{Hamiltonian dynamics on orthant complexes}
The HMC state space includes auxiliary momentum variables in addition to the usual state variables.
In our framework, the augmented state of this system is represented by the position $(\tau, q)$ and the momentum $p$, an $n$-dimensional vector.
We will denote the set of all possible augmented state $(\tau, q, p)$ of the system by $\mathbb{T}$.

The Hamiltonian is defined as in the traditional HMC setting: $H(\tau, q, p) = U(\tau, q) + K(p)$,
where $K(p) = \frac{1}{2}\|p\|^2$.
We will refer to $U(\tau, q)$ and $K(p)$ as the potential energy function and the kinetic energy function of the system at the state $(\tau,q, p)$, respectively.

Our stochastic Hamiltonian-type system of equations is:
\begin{align}
\begin{split}
\frac{dp_i}{dt} =& - \frac{\partial U}{\partial q_i}(\tau, q) \quad \qquad \ \, \text{if} ~~ q_i > 0 \\
p_i \gets - p_i; \ \ & \tau \sim \, Z(\mathcal{N}(\tau, q))  ~~~ \ \ \ \ \text{if} ~~ q_i = 0  \\
\frac{dq_i}{dt} =& \, p_i
\end{split}
\label{hamilton}
\end{align}
where $Z(A)$ denotes the uniform distribution on the set $A$.

If all coordinates of $q$ are positive, the system behaves as in the traditional Hamiltonian setting on $\mathbb{R}^n$.
When some attributes hit zero, however, the new orthant is picked randomly from the orthants of neighboring topologies (including the current one), and the momenta corresponding to non-positive attributes are negated.

Assumption $\ref{as-potential}$ implies that despite the non-differentiable changes in the governing equation across orthants, the Hamiltonian of the system along any path is constant:
\begin{Lemma}
 $H$ is conserved along any system dynamics.
 \label{conserve}
\end{Lemma}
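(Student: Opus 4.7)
The plan is to split the dynamics into its two qualitatively different regimes and verify conservation of $H$ on each regime, then match at the transition.

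First I would handle the smooth interior of any orthant, where all coordinates of $q$ are strictly positive. Here the governing equations reduce to the standard Hamiltonian system $\dot q_i = p_i$, $\dot p_i = -\partial U/\partial q_i$, and Assumption \ref{as-potential} guarantees that $U$ is smooth on the orthant so the chain rule applies. A direct computation
\[
\frac{dH}{dt} \;=\; \sum_i \frac{\partial U}{\partial q_i}\dot q_i + \sum_i p_i \dot p_i \;=\; \sum_i \frac{\partial U}{\partial q_i} p_i - \sum_i p_i \frac{\partial U}{\partial q_i} \;=\; 0
\]
then gives conservation inside each orthant. This step is routine.

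The substantive step is to check that the two boundary operations triggered when some $q_i$ hits zero, namely negating the corresponding momentum coordinates and redrawing the topology $\tau' \sim Z(\mathcal N(\tau,q))$, each leave $H$ unchanged. For the kinetic part, $K(p)=\tfrac12\|p\|^2$ depends only on the squared magnitudes of the components, so flipping signs of any subset of coordinates preserves $K$. For the potential part, the position $(\tau,q)$ jumps to $(\tau',q')$ with $\tau' \in \mathcal N(\tau,q)$; by Assumption \ref{as-coordinate} the local coordinates agree, so $q'=q$, and by the continuity of $U$ across orthant boundaries asserted in Assumption \ref{as-potential} we have $U(\tau',q)=U(\tau,q)$. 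Hence the potential is unchanged across the jump.

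Combining the two, $H$ is preserved both along the smooth flow within any orthant and across each boundary event, so $H$ is constant along every realization of the dynamics. The only point that needs care, and which I expect to be the subtle one, is invoking Assumption \ref{as-coordinate} to conclude that the attribute vector is genuinely the same in the old and new charts (so that continuity of $U$ can be applied pointwise at the shared boundary); without consistent local coordinates the two sides of the equality $U(\tau,q)=U(\tau',q')$ would not even be comparable.
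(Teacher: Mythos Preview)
Your proposal is correct and follows essentially the same approach as the paper: decompose the trajectory into smooth segments within orthants (where standard Hamiltonian conservation applies) and check that $H$ is unchanged across boundary events because $K$ depends only on $\|p\|^2$ and $U$ is continuous on $\mathcal{X}$. Your version is slightly more explicit in computing $dH/dt$ and in invoking Assumption~\ref{as-coordinate} to justify comparing $U$ in the two charts, but the structure is the same.
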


\subsection{A probabilistic ``leap-prog'' algorithm}
In practice, we approximate Hamiltonian dynamics by the following integrator with step size $\epsilon$, which we call ``leap-prog'' as a probabilistic analog of the usual leapfrog algorithm.
This extends previous work of \cite{afshar2015reflection} on RHMC where particles can reflect against planar surfaces of $\mathbb{R}^n_{\ge 0}$.

In the RHMC formulation, one breaks the step size $\epsilon$ into smaller sub-steps, each of which correspond to an event when some of the coordinates cross zero.
We adapt this idea to HMC on orthant complexes as follows.
Every time such an event happens, we reevaluate the values of the position and the momentum vectors, update the topology (uniformly at random from the set of neighboring topologies), reverse the momentum of crossing coordinates and continue the process until a total step size $\epsilon$ is achieved (Algorithm~\ref{alg}).
We note that several topologies might be visited in one leap-prog step.

If there are no topological changes in the trajectory to time $\epsilon$, this procedure is equivalent to classical HMC.
Moreover, since the algorithm only re-evaluates the gradient of the energy function at the end of the step when the final position has been fixed, changes in topology on the path have no effect on the changes of position and momentum.
Thus, the projection of the particles (in a single leap-prog step) to the $(q,p)$ space is identical to a leapfrog step of RHMC on $\mathbb{R}^n_{\ge 0}$.

\begin{algorithm}
\caption{Leap-prog algorithm with step size $\epsilon$.}
\label{CHalgorithm}
\begin{algorithmic}
\STATE $p \gets p - \epsilon \nabla U(\tau, q) /2$
\IF{$\text{FirstUpdateEvent}(\tau, q,p, \epsilon) = \emptyset$}
\STATE $q \gets q+\epsilon p$
\ELSE
\STATE $t \gets 0$
\WHILE{$\text{FirstUpdateEvent}(\tau, q,p, \epsilon -t) \ne \emptyset$}
\STATE $(q, e, I) \gets \text{FirstUpdateEvent}(\tau, q,p, \epsilon -t)$
\STATE $t \gets t+ e$
\STATE $\tau \sim Z(\mathcal{N}(\tau, q))$
\STATE $p_I \gets - p_I$
\ENDWHILE
\STATE $q \gets q+(\epsilon-t)p$
\ENDIF
\STATE $p \gets p - \epsilon \nabla U(\tau, q) /2$\\
\end{algorithmic}
\label{alg}
\end{algorithm}

Here $\text{FirstUpdateEvent}(\tau, q,p, t)$ returns $x$, the position of the first event for which the line segment $[q, q+ tp]$ crosses zero;  $e$, the time when this event happens; and $I$, the indices of the coordinates crossing zero during this event.
If $q_i$ and $p_i$ are both zero before FirstUpdateEvent is called, $i$ is not considered as a crossing coordinate.
If no such an event exists, $\emptyset$ is returned.

\section{Hamiltonian Monte Carlo on orthant complexes}

Probabilistic Path Hamiltonian Monte Carlo (PPHMC) with leap-prog dynamics iterates three steps, similar to that of classical HMC.
First, new values for the momentum variables are randomly drawn from their Gaussian distribution, independently of the current values of the position variables.
Second, starting with the current augmented state, $s=(\tau, q, p)$, the Hamiltonian dynamics is run for a fixed number of steps $T$ using the leap-prog algorithm with fixed step size $\epsilon$.
The momentum at the end of this trajectory is then negated, giving a proposed augmented state $s^* = (\tau^*, q^*, p^*)$.
Finally, this proposed augmented state is accepted as the next augmented state of the Markov chain with probability $r(s,s^*)= \min \left(1, \exp(H(s) - H(s^*))\right)$.

PPHMC has two natural advantages over MCMC methods for phylogenetic inference: jumps between topologies are guided by the potential surface, and many jumps can be combined into a single proposal with high acceptance probability.
Indeed, rather than completely random jumps as used for MCMC, the topological modifications of HMC are guided by the gradient of the potential.
This is important because there are an enormous number of phylogenetic trees, namely $(2n-3)!!$ trees with $n$ leaves.
Secondly, HMC can combine a great number of tree modifications into a single step, allowing for large jumps in tree space with high acceptance probability.
These two characteristics are analogs of why HMC is superior to conventional MCMC in continuous settings and what we aimed to extend to our problem.

\subsection{Theoretical properties of the leap-prog integrator}
To support the use of this leap-prog integrator for MCMC sampling, we establish that integrator retains analogs of the good theoretical properties of Hamiltonian dynamics in classical settings, namely, time-reversibility, volume preservation and accessibility (proofs in Appendix).

We formulate probabilistic reversibility as:
\begin{Lemma}[Reversibility]
For a fixed finite time horizon $T$, we denote by $P(s, s')$ the probability that the integrator moves $s$ to $s'$ in a single update step.
We have
\[
P((\tau, q, p), (\tau', q', p') ) =P((\tau', q', -p'), (\tau, q, -p) ).
\]
for any augmented states $(\tau', q',p')$ and $(\tau, q, p) \in \mathbb{T}$.
\label{reverse}
\end{Lemma}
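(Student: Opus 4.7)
The plan is to decompose the leap-prog dynamics into a deterministic $(q,p)$-component and a stochastic topology component and then exploit the fact that the two pieces interact only through the gradient evaluations at the endpoints. Algorithm~\ref{alg} evaluates $\nabla U$ only at the initial state $(\tau,q)$ and the final state $(\tau',q')$, so the intermediate event positions $x_1,\ldots,x_m$, event times $e_1,\ldots,e_m$, crossing index sets $I_1,\ldots,I_m$, and between-event momenta $\tilde p_0,\ldots,\tilde p_m$ depend only on $(q,p)$ and the endpoint pair $(\tau,\tau')$; the random draws inside the loop affect only the sequence of orthant labels $\tau_1,\ldots,\tau_{m-1}$.

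First I would tabulate one forward realization: the initial half-kick produces $\tilde p_0 = p - \tfrac{\epsilon}{2}\nabla U(\tau,q)$; between events the particle travels along straight lines at constant momentum; at event $i$ a topology is drawn uniformly from $\mathcal{N}(\cdot,x_i)$, which by Assumption~\ref{as-coordinate} depends only on the position $x_i$, and the coordinates in $I_i$ are reflected to obtain $\tilde p_i$; the final half-kick yields $p' = \tilde p_m - \tfrac{\epsilon}{2}\nabla U(\tau',q')$. The probability of this realization factors as a Dirac mass concentrated on the deterministic image $(q',p')$ of the leapfrog map conditioned on $(\tau,\tau')$, times the product $\prod_{i=1}^{m}|\mathcal{N}(x_i)|^{-1}$ of the uniform topology-sampling probabilities.

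Next I would run the algorithm from $T(s')=(\tau',q',-p')$. Its initial half-kick sends $-p'$ to $-\tilde p_m$; since the $(q,p)$-projection of leap-prog coincides with the RHMC leapfrog step on $\mathbb{R}^n_{\ge 0}$, which is deterministic and time-reversible, the reversed trajectory hits the same positions $x_m,\ldots,x_1$ at inter-event times $\epsilon-e_m,\,e_m-e_{m-1},\ldots,e_1$, with the momentum identity $-\tilde p_{i-1} = R_{I_i}(-\tilde p_i)$ (where $R_{I_i}$ flips the coordinates in $I_i$) guaranteeing that the terminal half-kick returns exactly $-p$. Because Assumption~\ref{as-coordinate} makes $\mathcal{N}(\cdot,x_i)$ independent of the current label, the reverse sampling probabilities for selecting the required label sequence $(\tau_{m-1},\ldots,\tau_1,\tau)$ multiply to the same product $\prod_{i=1}^{m}|\mathcal{N}(x_i)|^{-1}$.

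Comparing the resulting expressions for $P(s,s')$ and $P(T(s'),T(s))$ yields the identity, since both factor into the same Dirac term (invariant under the involution $T$ applied to both endpoints) and the same product of uniform topology weights. The main obstacle I anticipate is the bookkeeping for the measure-zero set of initial data producing simultaneous multi-coordinate crossings or tangential contact with a boundary; the algorithm's convention that a coordinate with $q_i=p_i=0$ is not treated as crossing keeps this exceptional set symmetric under $T$, so the argument carries through with a short case-check rather than substantive extra work.
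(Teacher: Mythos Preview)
Your proposal is correct and follows essentially the same route as the paper's proof: decompose a leap-prog step into its deterministic reflective $(q,p)$-trajectory and the independent uniform topology draws at the boundary events, then use time-reversibility of the RHMC leapfrog together with the fact that $|\mathcal{N}(\tau^{(j)},q^{(j)})| = |\mathcal{N}(\tau^{(j+1)},q^{(j)})|$ to match each forward path with its reversed counterpart carrying identical probability weight. The paper phrases this as an identity $P_\gamma(s,s') = P_\gamma(\sigma(s'),\sigma(s))$ for each individual path $\gamma$ and then (implicitly) sums over paths; your version makes the Dirac/product factorization and the bijection between forward and reverse topology sequences more explicit, and also flags the boundary edge cases, but the substance is the same.
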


The central part of proving the detailed balance condition of PPHMC is to verify that Hamiltonian dynamics preserves volume.
Unlike the traditional HMC setting where the proposal distribution is a single point mass, in our probabilistic setting, if we start at one augmented state $s$, we may end up at countably many end points due to stochastic HMC integration.
The equation describing volume preservation in this case needs to be generalized to the form of Equation $\eqref{eq:vol}$, where the summations account for the discreteness of the proposal distribution.
\begin{Lemma}[Volume preservation]
For every pair of measurable sets $A,B \subset \mathbb{T}$ and elements $s,s' \in  \mathbb{T}$, we denote by $ P(s,s')$ the probability that the integrator moves $s$ to $s'$ in a single update step and define
\[
B(s) = \{  s' \in B: P(s, s') >0 \}
\]
and
\[
A(s') = \{  s \in A: P(s, s') >0 \}.
\]
Then
\begin{equation}
\int_{A} {\sum_{s' \in B(s)}{P(s,s')} ~ds} = \int_{B} {\sum_{s \in A(s')}{P(s',s)} ~ds'}.
\label{eq:vol}
\end{equation}
\label{vol}
\end{Lemma}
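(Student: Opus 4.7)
The plan is to exploit the fact that the randomness in a single leap-prog step is entirely discrete — only the topology choices at boundary crossings are random — while the projection of the trajectory onto $(q,p)$ coordinates is deterministic. This reduces everything to classical volume preservation of leapfrog combined with a reindexing over the countable set of topology-choice histories.

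First I would index each leap-prog step by a \emph{scenario} $\sigma$ recording the starting topology, the sequence of coordinate sets that cross zero at each boundary event, and the topology chosen at each event. The set of scenarios is countable. For each $\sigma$, let $U_\sigma \subset \mathbb{T}$ be the measurable set of initial augmented states whose deterministic $(q,p)$ trajectory realizes $\sigma$, and let $\Phi_\sigma \colon U_\sigma \to \mathbb{T}$ be the resulting endpoint. Then
\[
P(s,\cdot) \;=\; \sum_\sigma w_\sigma(s)\,\delta_{\Phi_\sigma(s)}, \qquad w_\sigma(s) \;=\; \mathbf{1}_{U_\sigma}(s)\prod_{j=1}^{k(\sigma)}\frac{1}{|\mathcal{N}(\tau_j(s), q_j(s))|},
\]
where $(\tau_j(s), q_j(s))$ is the $j$-th boundary point on the trajectory from $s$. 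Exchanging sum and integral (all terms are nonnegative) converts the left-hand side of the claim to $\sum_\sigma \int_{A \cap U_\sigma} w_\sigma(s)\,\mathbf{1}_B(\Phi_\sigma(s))\,ds$.

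Next I would show that each $\Phi_\sigma$ is a Lebesgue-measure-preserving bijection from $U_\sigma$ onto its image $V_\sigma \subset \mathbb{T}$. Between boundary events the position dynamics are straight-line motion and the momentum half-steps are shears, both volume-preserving; at each event we negate the components of $p$ in the crossing coordinates (an isometry of $\mathbb{R}^n$) and relabel the topology (a bijection of countably many discrete labels). Applying the change of variables $s' = \Phi_\sigma(s)$ on each summand turns it into $\int_{\Phi_\sigma(A \cap U_\sigma) \cap B} w_\sigma(\Phi_\sigma^{-1}(s'))\,ds'$.

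Finally I would pair each scenario $\sigma$ with its time-reverse $\sigma^*$: start from $\Psi(\Phi_\sigma(s)) = (\tau_f, q_f, -p_f)$, where $\Psi$ denotes momentum negation, and traverse the same sequence of boundary points in reverse order while reading the topology choices off $\sigma$ backwards. By Lemma~\ref{reverse}, together with the fact that $|\mathcal{N}(\tau, q)|$ depends only on the point in $\mathcal{X}$ (every orthant containing that point sees the same neighbor set, by Assumption~\ref{as-coordinate}), the weights satisfy $w_\sigma(s) = w_{\sigma^*}(\Psi(\Phi_\sigma(s)))$ and the endpoint maps obey $\Phi_{\sigma^*} \circ \Psi \circ \Phi_\sigma = \Psi$ on $U_\sigma$. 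Composing the preceding change of variables with the measure-preserving involution $s' \mapsto \Psi(s')$ and reindexing the outer sum by $\sigma \mapsto \sigma^*$ then produces exactly $\int_B \sum_{s \in A(s')} P(s',s)\,ds'$.

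The main obstacle I expect is the bookkeeping on the boundary between scenarios — trajectories that end precisely on a face, or touch a codimension-$\ge 2$ face where several coordinates vanish simultaneously — which must be excluded from the $U_\sigma$ in order for $\Phi_\sigma$ to be well defined. These degenerate configurations form a measure-zero subset of $\mathbb{T}$, being preimages of finitely many lower-dimensional affine subspaces under the piecewise-smooth leap-prog map, so they contribute nothing to either side of the claimed identity, but verifying that the $\{U_\sigma\}$ do partition $\mathbb{T}$ up to a Lebesgue null set is where the argument is least routine.
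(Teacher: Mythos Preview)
Your decomposition by scenarios $\sigma$ and the per-scenario volume-preservation argument coincide with the paper's proof: the paper indexes paths by pairs $(\alpha,\omega)$ (your $\sigma$), notes that once the scenario is fixed the leap-prog map $\phi_{\alpha,\omega}$ is deterministic with Jacobian determinant $1$ (they cite the RHMC results of \citet{afshar2015reflection} rather than rederive the shear/reflection decomposition), makes the pieces disjoint, changes variables on each piece, and sums. Your measure-zero bookkeeping paragraph is also in the same spirit as the paper's appeal to Lemma~\ref{lem:finite}.

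The one genuine gap is your final step. After the change of variables $s'=\Phi_\sigma(s)$ you already have
\[
\sum_\sigma \int_{B}\mathbf 1_{\Phi_\sigma^{-1}(s')\in A}\,w_\sigma\!\bigl(\Phi_\sigma^{-1}(s')\bigr)\,ds'
\;=\;\int_B\sum_{s\in A(s')}P(s,s')\,ds',
\]
so what remains is the symmetry $P(s,s')=P(s',s)$. The paper simply asserts this and uses it directly. Your proposed route --- compose with the involution $\Psi$ and reindex $\sigma\mapsto\sigma^*$ --- does not land on the right-hand side: the substitution $s'\mapsto\Psi(s')$ carries $B$ to $\Psi(B)$ and $A$ to $\Psi(A)$, so after reindexing you obtain $\int_{\Psi(B)}\sum_{u\in\Psi(A)}P(t,u)\,dt$, not $\int_B\sum_{s\in A(s')}P(s',s)\,ds'$. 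Your scenario-level identity $\Phi_{\sigma^*}\circ\Psi\circ\Phi_\sigma=\Psi$ is exactly Lemma~\ref{reverse}, and summing it over $\sigma$ yields $P(s,s')=P(\Psi(s'),\Psi(s))$, which is not the same as $P(s,s')=P(s',s)$. The latter holds only once the terminal momentum flip is absorbed into the update kernel (so that the full proposal becomes a scenario-wise involution, $\Phi_{\sigma^*}=\Phi_\sigma^{-1}$); under that reading no explicit $\Psi$ substitution is needed at all, and the reindexing $\sigma\mapsto\sigma^*$ alone finishes the argument. With your $\Phi_\sigma$ taken \emph{before} negation, the $\Psi$ step is a detour that does not close.
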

If we restrict to the case of trajectories staying in a single topology, $A(s)$ and $B(s')$ are singletons and we get back the traditional equation of volume preservation.
We also note that the measure $ds$ in Equation $\eqref{eq:vol}$ is the Lebesgue measure: when there is no randomness in the Hamiltonian paths, (3.1) becomes the standard volume preservation condition, where volumes are expressed by the Lebesgue measure.

Typically, accessibility poses no major problem in various settings of HMC since it is usually clear that one can go between any two positions in a single HMC step.
In the case of PPHMC, however, the composition of discrete and continuous structure, along with the possible non-differentiability of the potential energy across orthants, make it challenging to verify this condition.
Here we show instead that the PPHMC algorithm can go between any two states with $k$ steps, where $k$ denotes the diameter of adjacency graph $\mathcal{G}$ the space $\mathcal{X}$ and each PPHMC step consists of $T$ leap-prog steps of size $\epsilon$.

\begin{Lemma}[$k$-accessibility]
For a fixed starting state $(\tau^{(0)}, q^{(0)})$, any state $(\tau', q') \in \mathcal{X}$ can be reached from $(\tau^{(0)}, q^{(0)})$ by running $k$ steps of PPHMC.
\label{eqb}
\end{Lemma}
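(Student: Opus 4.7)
By Assumption~\ref{as-diam}, fix a shortest path $\tau^{(0)} = \tau_0, \tau_1, \ldots, \tau_m = \tau'$ of pairwise adjacent topologies in $\mathcal{G}$, so that $m \leq k$. My strategy is to split the $k$ PPHMC steps into an \emph{approach} phase of $m$ steps, in which the chain crosses from orthant $\tau_{j-1}$ into orthant $\tau_j$ at step $j$, and a \emph{refinement} phase of the remaining $k-m$ steps inside orthant $\tau'$. The interpretation of ``can be reached'' I will work with is that the $k$-step transition kernel charges every open neighborhood of $(\tau', q')$ with positive mass, which is what is needed for the downstream ergodicity argument.

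For a single approach step from $(\tau_{j-1}, q_{j-1})$ into a small open set $V_j$ around an intermediate target $(\tau_j, q_j)$, I would argue as follows. Since leap-prog, projected to the $(q,p)$ variables, agrees with RHMC on $\mathbb{R}_{\ge 0}^n$, the deterministic part of the integrator is piecewise smooth and, for generic momenta, behaves like classical leapfrog with reflection at the coordinate hyperplanes. Classical leapfrog is a diffeomorphism in the initial momentum, so by continuity of the RHMC map there is a non-empty open set $W \subset \mathbb{R}^n$ of initial momenta whose leap-prog trajectory (i)~crosses the facet separating $\tau_{j-1}$ and $\tau_j$ at some sub-step and (ii)~terminates at a position whose projection onto $\mathbb{R}_{\ge 0}^n$ lies in $V_j$. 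The Gaussian momentum draw assigns positive mass to $W$; at the crossing event the uniform topology draw selects $\tau_j$ with positive probability $1/|\mathcal{N}|$; and the Metropolis--Hastings ratio is positive because $H$ is continuous and bounded on any compact region by Assumption~\ref{as-potential} and Lemma~\ref{conserve}. Chaining $m$ such steps yields positive probability of ending up in orthant $\tau'$ arbitrarily close to $q'$. The refinement phase reduces to standard HMC accessibility within a single orthant, since there leap-prog coincides with classical leapfrog and any target position is reachable with positive density by an appropriate choice of momentum.

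The main obstacle is the openness of $W$: showing that the prescribed crossing pattern is not only achievable for some isolated momentum but on an open set of initial momenta. One must rule out degenerate trajectories that graze the facet tangentially, but these form a nowhere-dense set because the transversal crossing condition ``$q_i = 0$ with $p_i$ bounded away from $0$'' is generic in momentum space. A separate subtlety is that $q'$ may lie on a lower-dimensional face of $\tau'$'s orthant, in which case it is shared among several topologies and the ``target orthant'' is ambiguous; this is handled by approximating $q'$ with interior points of the chosen orthant and invoking continuity of the transition density. Once this single-step crossing claim is in hand, $k$-accessibility follows by concatenating the $m$ approach steps with $k-m$ trivial refinement steps.
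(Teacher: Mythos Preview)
Your plan is correct and shares the paper's skeleton: invoke Assumption~\ref{as-diam} to get a path $\tau^{(0)},\ldots,\tau^{(m)}=\tau'$ of adjacent topologies, and spend one PPHMC step per facet crossing. The difference is in what each argument actually establishes. The paper's proof of Lemma~\ref{eqb} is deliberately minimal: it only exhibits a \emph{single} sequence of momenta whose leap-prog trajectory lands exactly at $(\tau',q')$, by choosing intermediate boundary points $(\tau^{(i)},q^{(i)})$ on the shared facets and appealing to classical HMC accessibility within each orthant to connect consecutive ones. There is no open set of momenta, no neighborhood argument, and no transversality discussion. The measure-theoretic content you fold in---that a positive-mass set of initial momenta reaches any open target---is deferred to the separate Lemma~\ref{lips}, which shows the momentum-to-endpoint map along a fixed topology sequence is Lipschitz, so null sets of momenta map to null sets of endpoints; irreducibility then follows in the proof of Theorem~\ref{ergodic} by combining the two lemmas. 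Your route effectively merges Lemmas~\ref{eqb} and~\ref{lips} into one argument via the openness of $W$, which is more self-contained for the irreducibility application but costs you the genericity/transversality discussion you correctly flag as the main obstacle. The paper's decomposition keeps the existence step short and isolates the measure theory in a reusable Lipschitz lemma.
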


The proof of this Lemma is based on Assumption $\ref{as-diam}$, which asserts that the adjacency graph $\mathcal{G}$ of $\mathcal{X}$ has finite diameter, and that classical HMC allows the particles to move freely in each orthant by a single HMC step.

To show that Markov chains generated by PPHMC are ergodic, we also need to prove that the integrator can reach any subset of positive measure of the augmented state space with positive probability.
To enable such a result, we show:

\begin{Lemma}
For every sequence of topologies $\omega = \{\tau^{(0)}, \tau^{(1)}, \ldots, \tau^{(n_{\omega})}\}$ and every set with positive measure $B \subset \mathcal{X}$, let $B_{\omega}$ be the set of all $(\tau', q') \in B$ such that $(\tau', q')$ can be reached from $(\tau^{(0)}, q^{(0)})$ in $k$ PPHMC steps and such that the sequence of topologies crossed by the trajectory is $\omega$.
We denote by $I_{B, \omega}$ the set of all sequences of initial momenta for each PPHMC step $\{p^{(0)}, \ldots, p^{(k)}\}$ that make such a path possible.

Then, if $\mu(I_{B, \omega})=0$, then $\mu(B_{\omega})=0$.
\label{lips}
\end{Lemma}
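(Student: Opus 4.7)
The plan is to fix the topology sequence $\omega$, treat the map from initial momenta to the final position as a smooth submersion, and then extract the claim from the coarea formula. Both moves are routine once one handles the facet-crossing structure of leap-prog.

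I would let $U_\omega \subset \mathbb{R}^{nk}$ denote the set of momentum sequences $(p^{(0)},\ldots,p^{(k-1)})$ whose leap-prog trajectory, initialised at $(\tau^{(0)},q^{(0)})$, traverses exactly the topologies listed in $\omega$. Fixing $\omega$ removes the only source of randomness in Algorithm~\ref{alg}, so the map $\Psi_\omega:U_\omega\to \mathcal{X}$, $(p^{(0)},\ldots,p^{(k-1)})\mapsto q^{(k)}$, is well-defined and deterministic. Using Assumption~\ref{as-potential} (smoothness of $U$ in each orthant) together with the implicit-function theorem applied to each transverse facet crossing, I would verify that $U_\omega$ is open and $\Psi_\omega$ is $C^1$ and locally Lipschitz on $U_\omega$. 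The key Jacobian identity comes from the leap-prog recursion: $\partial q^{(k)}/\partial p^{(k-1)}=T\epsilon I+O((T\epsilon)^2)$, which is invertible for the step sizes under consideration, so $D\Psi_\omega$ has rank $n$ everywhere on $U_\omega$ and $\Psi_\omega$ is a submersion.

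With these ingredients, applying the coarea formula on the truncation $U_\omega\cap\{\|p\|\le R\}$ yields
\[
\int_{I_{B,\omega}\cap\{\|p\|\le R\}} |J_n\Psi_\omega|\, dp \;=\; \int_{B_\omega} \mathcal{H}^{n(k-1)}\bigl(\Psi_\omega^{-1}(q) \cap I_{B,\omega}\cap\{\|p\|\le R\}\bigr)\, dq.
\]
For every $q\in B_\omega$ the whole fiber $\Psi_\omega^{-1}(q)\cap U_\omega$ is a non-empty open $(n(k-1))$-dimensional submanifold, and it is contained in $I_{B,\omega}$ by the very definition of $B_\omega$; for $R$ sufficiently large its intersection with $\{\|p\|\le R\}$ therefore has positive $\mathcal{H}^{n(k-1)}$-measure. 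If $\mu(I_{B,\omega})=0$ the left-hand side vanishes for every $R$, so letting $R\to\infty$ and invoking monotone convergence on the right forces $\mu(B_\omega)=0$.

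The hard part will be the first step—verifying that $\Psi_\omega$ is genuinely $C^1$ with invertible final-step Jacobian across every crossing prescribed by $\omega$. Between crossings the dynamics are smooth by Assumption~\ref{as-potential}, but at each crossing one must glue the coordinate reflection of the crossing momentum, the relabelling dictated by Assumption~\ref{as-coordinate}, and the smooth dependence of the crossing time on the incoming state (via the implicit-function theorem on the level set $q_i=0$) into a single smooth flow on $U_\omega$. Once this regularity is secured, the coarea step is routine, and the countability of the set of sequences $\omega$ achievable in $k$ PPHMC steps of bounded integration time ensures the argument is measurably compatible with the subsequent union over $\omega$ used in the ergodicity proof.
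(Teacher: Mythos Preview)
Your approach differs substantially from the paper's. The paper argues by induction on the number $n_\omega$ of boundary crossings that the map $p \mapsto q^{(n_\omega)}$ is Lipschitz (compositions of Lipschitz pieces being Lipschitz), and then simply invokes the fact that Lipschitz maps carry null sets to null sets. Your coarea-formula route is more elaborate but is arguably better adapted to the dimensional structure of the problem: $\Psi_\omega$ goes from $\mathbb{R}^{nk}$ to $\mathbb{R}^n$ with $k\ge 2$, and Lipschitz continuity alone does \emph{not} guarantee that a Lebesgue-null subset of the higher-dimensional domain has Lebesgue-null image in the lower-dimensional target (a coordinate projection is the standard counterexample). The paper is terse on exactly this point, whereas your use of coarea makes the role of the $(n(k-1))$-dimensional fibers $\Psi_\omega^{-1}(q)$ explicit and addresses that dimensional mismatch directly.

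That said, there is a genuine weak spot in your plan: the assertion that $\partial q^{(k)}/\partial p^{(k-1)}=T\epsilon\, I+O((T\epsilon)^2)$ is invertible ``for the step sizes under consideration.'' This expansion is informative only when $T\epsilon$ is small, but the lemma is stated for arbitrary fixed $T,\epsilon>0$, and even for a smooth potential the leapfrog block $\partial q'/\partial p$ can degenerate at certain horizons (the harmonic oscillator at its resonance time is the textbook example). Without full rank you lose both the submersion property and the positivity of the fiber measure that drives your coarea contradiction. To make this step rigorous you would need either an explicit smallness hypothesis on $T\epsilon$, or a further argument that the rank-deficient locus of $D\Psi_\omega$ is itself Lebesgue-null in $U_\omega$ (for instance via a Sard-type argument, exploiting that the leap-prog map is piecewise real-analytic once $\omega$ is fixed), so that coarea applied on the full-rank complement still forces $\mu(B_\omega)=0$.
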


We also need certain sets to be countable.

\begin{Lemma}
\label{lem:finite}
Given $s \in \mathbb{T}$, we denote by $R(s)$ the set of all augmented states $s'$ such that there is a finite-size leap-prog step with path $\gamma$ connecting $s$ and $s'$, and by $K(s)$ the set of all such leap-prog paths $\gamma$ connecting $s$ and $s' \in R(s)$.
Then $R(s)$ and $K(s)$ are countable.
Moreover, the probability $P_{\infty}( s, s')$ of moving from $s$ to $s'$ via paths with infinite number of topological changes is zero.
\end{Lemma}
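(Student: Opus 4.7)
My plan is to exploit the fact, already noted in the paragraph following Algorithm~\ref{alg}, that the $(q,p)$ projection of a leap-prog step is deterministic: the random topology draws do not feed back into the evolution of position or momentum. Consequently the only source of branching in $K(s)$ is the uniform topology draw at each boundary-crossing event, while the event times $0 < e_1 < e_2 < \cdots \le \epsilon$ themselves are completely determined by $s$.

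First I would show that within a single leap-prog step the number of crossing events is in fact finite, bounded by $n$. After the initial half-step kick the position evolves linearly along $p$ until some coordinate hits zero. At that event the algorithm negates $p_I$ on the crossing set $I$ and leaves the other momenta untouched. For any $i \in I$ one has $p_i < 0$ just before the event (otherwise $i$ would not be a genuine crossing coordinate; the degenerate configuration $q_i = p_i = 0$ is explicitly excluded by FirstUpdateEvent), so after the reflection $p_i > 0$ and $q_i$ strictly increases for the rest of the step. Since no later event can modify $p_i$ — future reflections only touch the momenta of then-crossing coordinates, and $q_i > 0$ prevents $i$ from being such a coordinate again — each coordinate crosses at most once, giving at most $n$ events in total.

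With this bound in hand, countability of $R(s)$ and $K(s)$ is immediate. At each of the at most $n$ events the topology is drawn uniformly from $\mathcal{N}(\tau, q)$, which is finite by the orthant-complex axiom that every point lies in finitely many orthants. The set of possible topology sequences is thus bounded by a finite product of finite sets and is itself finite, hence certainly countable; and $R(s)$, the image of $K(s)$ under the path-to-endpoint map, is countable as well. The conclusion $P_\infty(s, s') = 0$ then follows at once, because the event ``path has infinitely many topological changes'' is empty.

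The delicate step — the one I would slow down to verify carefully — is the ``each coordinate crosses at most once'' claim. The worry is Zeno-type accumulation at corners where several orthants meet, or degenerate configurations where some $p_i$ happens to vanish exactly at a crossing. I would work through the FirstUpdateEvent bookkeeping together with Assumption~\ref{as-coordinate} to check that no hidden re-crossing mechanism survives. If such a mechanism were in fact possible on a measure-zero set of configurations, the finite conclusion would weaken to genuine countability, but $P_\infty = 0$ would still follow by bounding the probability of any specific infinite path by an infinite product of factors at most $|\mathcal{N}(\cdot)|^{-1} \le 1/2$.
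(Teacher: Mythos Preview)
Your argument is correct and in fact sharper than the paper's. The paper never bounds the number of boundary events; it only argues that the crossing times form a strictly increasing, hence countable, subset of $[0,\epsilon]$, that the $(q,p)$ projection of the step is deterministic (so all elements of $R(s)$ share the same $(q,p)$ and differ only in the countable topology label), and that at each crossing there are finitely many topology choices, yielding countability of $K(s)$. For $P_\infty(s,s')=0$ the paper then invokes exactly your fallback: any path with infinitely many topological changes has probability bounded by an infinite product of factors $1/|\mathcal{N}(\cdot)| \le 1/2$, hence zero, and one sums over the countable $K(s)$.

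Your route is different: you observe that during the position-update phase the momentum is only modified by sign flips on the crossing coordinates, so once coordinate $i$ reflects it has $p_i>0$ and can never return to $q_i=0$. This gives at most $n$ events per leap-prog step, making $K(s)$ and $R(s)$ \emph{finite} and rendering the infinite-path event empty. This is more elementary and yields a strictly stronger conclusion; it also patches the slightly informal ``strictly increasing $\Rightarrow$ countable'' step in the paper by showing the set of event times is actually finite. The paper's more abstract argument has the compensating virtue of robustness: it would survive unchanged under modifications of the integrator that re-evaluate $\nabla U$ between events (as in the surrogate version, Algorithm~\ref{CHalgorithm_surr}), where your ``each coordinate crosses at most once'' claim could fail.
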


\subsection{Ergodicity of Probabilistic Path HMC}

In this section, we establish that a Markov chain generated by PPHMC is ergodic with stationary distribution $\pi(\tau, q) \propto \exp(-U(\tau, q))$.
To do so, we need to verify that the Markov chain generated by PPHMC is aperiodic, because we have shown $k$-accessibility of the integrator rather than $1$-accessibility.
Throughout this section, we will use the notation $P((\tau, q, p), \cdot)$ to denote the one-step proposal distribution of PPHMC starting at augmented state $(\tau, q, p)$, and $P((\tau, q), \cdot)$ to denote the one-step proposal distribution of PPHMC starting at position $(\tau, q)$ and with a momentum vector drawn from a Gaussian as described above.

We first note that:
\begin{Lemma}
PPHMC preserves the target distribution $\pi$.
\label{invariant}
\end{Lemma}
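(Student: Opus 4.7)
I plan to establish invariance of the target by verifying that PPHMC satisfies a generalized detailed balance condition on the augmented space with respect to $\tilde{\pi}(\tau, q, p) \propto \exp(-H(\tau, q, p))$; marginalizing out $p$ then yields invariance of $\pi$. A single PPHMC step factors as (i) a Gaussian momentum refresh, (ii) a $T$-fold leap-prog proposal followed by terminal momentum negation, and (iii) a Metropolis--Hastings accept/reject with probability $r(s, s^*) = \min\{1, \exp(H(s) - H(s^*))\}$. It suffices to show that each of these three kernels leaves $\tilde{\pi}$ invariant, since their composition is the PPHMC transition kernel.

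Step (i) is immediate because sampling $p \sim \mathcal{N}(0, I)$ draws from the exact conditional of $\tilde{\pi}$ given $(\tau, q)$. For steps (ii) and (iii), the leap-prog proposal composed with momentum negation is, by Lemma~\ref{reverse}, reversible in the generalized probabilistic sense: the probability of landing at $s^*$ starting from $s$ equals the probability of landing at $s$ starting from $s^*$ (using $H(\tau, q, p) = H(\tau, q, -p)$ to absorb the terminal momentum flip into the statement of Lemma~\ref{reverse}). Combining this reversal with the generalized volume-preservation identity of Lemma~\ref{vol} and the standard Metropolis identity $r(s, s^*) e^{-H(s)} = r(s^*, s) e^{-H(s^*)}$, I will derive the detailed balance identity
\[
\int_{A} \sum_{s^* \in B(s)} e^{-H(s)} P(s, s^*)\, r(s, s^*)\, ds \;=\; \int_{B} \sum_{s \in A(s^*)} e^{-H(s^*)} P(s^*, s)\, r(s^*, s)\, ds^*
\]
for arbitrary measurable $A, B \subset \mathbb{T}$, where $P(s, s^*)$ is the leap-prog transition probability. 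Together with the trivial rejection contribution (which is symmetric in $s, s^*$ by construction), this yields invariance of $\tilde{\pi}$ under steps (ii) and (iii), and hence invariance of $\pi$ on the position marginal.

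The main obstacle I anticipate is the bookkeeping for the countable branching of leap-prog trajectories, which is absent in classical HMC where the proposal kernel is a point mass. Lemma~\ref{lem:finite} is exactly the right tool: it guarantees that the reachable set $R(s)$ and the set of leap-prog paths $K(s)$ are countable and that trajectories with infinitely many topology changes contribute zero probability, so the sums in the displayed identity are unambiguously defined and the null set of pathological paths can be discarded without affecting any integral. Once this combinatorial bookkeeping is in place, the chain of implications Reversibility $+$ Volume preservation $+$ Metropolis identity $\Rightarrow$ detailed balance is then essentially a transcription of the standard HMC argument, and the proof concludes.
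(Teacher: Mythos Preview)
Your proposal is correct and follows essentially the same route as the paper: define the canonical distribution $\tilde{\pi}\propto e^{-H}$ on the augmented space, verify that the momentum refresh samples from the correct conditional, combine the Metropolis identity $r(s,s^*)e^{-H(s)}=r(s^*,s)e^{-H(s^*)}$ with the probabilistic volume-preservation identity of Lemma~\ref{vol} to obtain detailed balance for the leap-prog/accept step, and then marginalize. The only cosmetic difference is that you invoke Lemmas~\ref{reverse} and~\ref{lem:finite} explicitly, whereas the paper's proof leans on the fact that Lemma~\ref{vol} already has reversibility (and the countability bookkeeping) built into its statement and proof; either way the argument is the same.
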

Given probabilistic volume preservation \eqref{vol}, the proof is standard and is given in the Appendix.

\begin{Theorem}[Ergodic]
\label{ergodic}
The Markov chain generated by PPHMC is ergodic.
\end{Theorem}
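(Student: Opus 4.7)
The plan is to assemble the preceding lemmas into the standard Meyn--Tweedie / Tierney template: ergodicity will follow from (i) $\pi$-invariance (Lemma \ref{invariant}), (ii) $\pi$-irreducibility, and (iii) aperiodicity, after which I invoke a general convergence theorem for Markov chains on general state spaces. Since Lemma \ref{invariant} already supplies invariance, the proof reduces to establishing irreducibility and aperiodicity for the augmented chain.

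For irreducibility, I would fix an arbitrary starting state $(\tau^{(0)}, q^{(0)})$ and a measurable set $B \subset \mathcal{X}$ with $\pi(B) > 0$ (equivalently $\mu(B) > 0$, since $\pi$ has a continuous positive density in each orthant by Assumption~\ref{as-potential}), and show $P^k((\tau^{(0)}, q^{(0)}), B) > 0$, where $k$ is the diameter from Assumption~\ref{as-diam}. Lemma \ref{eqb} guarantees that every point of $B$ is reachable in $k$ PPHMC steps; Lemma \ref{lem:finite} says each step has only countably many possible topology sequences and zero probability on paths with infinitely many crossings, so the set of admissible $\omega$ over $k$ steps is countable and
\[
\mu\Big(B \setminus \bigcup_{\omega} B_\omega\Big) = 0.
\]
Hence there is at least one $\omega$ with $\mu(B_\omega) > 0$. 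By the contrapositive of Lemma \ref{lips}, $\mu(I_{B,\omega}) > 0$. Since the momenta are drawn from a Gaussian density that is strictly positive on $\mathbb{R}^{nk}$, the probability of sampling an initial momentum sequence inside $I_{B,\omega}$ is strictly positive; and because $r(s,s^*) = \min(1, \exp(H(s) - H(s^*))) > 0$ for every pair, all $k$ proposals are accepted with positive probability. Putting these together yields $P^k((\tau^{(0)}, q^{(0)}), B) > 0$, i.e.\ $\pi$-irreducibility.

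For aperiodicity, I would exploit the Metropolis rejection step: for any starting state there is a positive-probability set of momentum draws for which the leap-prog integrator produces a proposal $s^\ast$ with $H(s^\ast) > H(s)$ (leap-prog is not exactly energy-conserving across topology changes and across several steps), giving strictly positive rejection probability. A rejection leaves the position chain at $(\tau^{(0)}, q^{(0)})$, so the one-step kernel has a Dirac atom at the current point with positive mass. Combining this atom with the $k$-step reachability from the previous paragraph gives $P^{k+j}((\tau^{(0)}, q^{(0)}), B) > 0$ for all $j \ge 0$, so the greatest common divisor of the return times is $1$ and the chain is aperiodic.

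With $\pi$-invariance, $\pi$-irreducibility, and aperiodicity all established, I would close by invoking the standard theorem (e.g.\ Tierney 1994, Theorem~1, or Meyn--Tweedie Chapter~13) that any aperiodic $\pi$-irreducible Markov chain with invariant distribution $\pi$ is ergodic in total variation from $\pi$-almost every starting state, and use the reachability argument above to upgrade this to every starting state. The main obstacle is the irreducibility step: the combinatorial structure means the ``which endpoint in $B$ can be hit'' analysis does not reduce to a single deterministic trajectory, and the careful bookkeeping needed to transfer positive Lebesgue measure on endpoints back to positive Gaussian measure on initial momenta is precisely where Lemmas \ref{lips} and \ref{lem:finite} are indispensable. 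Aperiodicity and the final invocation of the general ergodic theorem are comparatively routine.
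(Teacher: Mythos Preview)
Your irreducibility argument is essentially identical to the paper's: both decompose $B$ over the countable family of topology sequences $\omega$ (Lemmas~\ref{eqb} and~\ref{lem:finite}), use Lemma~\ref{lips} in contrapositive form to find an $\omega$ with $\mu(I_{B,\omega})>0$, and then integrate over the Gaussian momentum law to conclude that some iterate of the kernel hits $B$ with positive probability. Your bookkeeping (and your explicit mention of the acceptance ratio being strictly positive) is in fact slightly more careful than the paper's.

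Where you diverge is aperiodicity. The paper does not use the Metropolis rejection atom. Instead it argues by contradiction: reversibility of the dynamics forces any period to equal $2$, and then for any interior point $x\in X_1$ one observes that classical leapfrog within a single orthant reaches an entire open neighborhood $U_x$ of $x$ (by varying the initial momentum), which forces $\mu(U_x\cap X_1)=0$ and hence $\pi(X_1)=0$. Your route---``there is always positive probability that $H(s^\ast)>H(s)$, hence a Dirac atom at the current state''---is the standard shortcut for Metropolis-type samplers, but as stated it is not justified by the paper's assumptions. Leapfrog is \emph{exact} for affine potentials (constant force), so if $U$ happens to be affine on an orthant and the trajectory stays interior, then $H(s^\ast)=H(s)$ for every momentum draw and the rejection probability is identically zero. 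Nothing in Assumption~\ref{as-potential} rules this out. Your parenthetical that ``leap-prog is not exactly energy-conserving across topology changes'' does not rescue the argument either, since small momenta keep the path inside a single orthant. The paper's local-reachability argument avoids this issue entirely because it never appeals to inexactness of the integrator.

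In short: your overall template and irreducibility step match the paper; your aperiodicity step takes a different and more fragile route that, without an extra hypothesis guaranteeing nonzero rejection, has a genuine gap.
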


\begin{proof}[Proof of Theorem \ref{ergodic}]
For every sequence of topologies $\omega =  \{\tau^{(0)}, \tau^{(1)}, \ldots, \tau^{(n_{\omega})}\}$ (finite by Lemma~\ref{lem:finite}) and every set with positive measure $B \subset \mathcal{X}$, we define $B_{\omega}$ and $I_{B, \omega}$ as in the previous section.
By Lemma $\ref{eqb}$, we have
\[
B = \bigcup_{\omega}{B_{\omega}}.
\]
Assume that $\mu(I_{B, \omega}) = 0$ for all $\omega$.
From Lemma $\ref{lips}$, we deduce that  $\mu(B_{\omega})=0$ for all $\omega$.
This makes $\mu(B) = 0$, which is a contradiction.
Hence $\mu(I_{B, \omega}) > 0$ for some $\omega$ and $P^{n_{\omega}}((\tau^{(0)},q^{(0)}), B )$ is at least the positive quantity
\begin{align*}
\frac{1}{Z} \int_{p \in I_{B, \omega}}{P^{n_{\omega}}((\tau^{(0)}, q^{(0)}, p),  B)~\exp (-K(p))dp}
\end{align*}
where $Z$ is the normalizing constant.
This holds for all sets with positive measure $B \subset \mathcal{X}$, so PPHMC is irreducible.

Now assume that a Markov chain generated by the leapfrog algorithm is periodic.
The reversibility of Hamiltonian dynamics implies that the period $d$ must be equal to 2.
In other words, there exist two disjoint subsets $X_1$, $X_2$ of $\mathcal{X}$ such that $\pi(X_1)>0$, and
\[
P(x, X_2)=1~~ \forall x \in X_1,  ~~\text{and}  ~~P(x, X_1)=1~~ \forall x \in X_2.
\]
Consider $x \in X_1$ with all positive attributes.
There exists a neighborhood $U_x$ around $x$ such that any $y \in U_x$ is reachable from $x$ by Hamiltonian dynamics.
Since $X_1, X_2$ are disjoint, we deduce that $\mu(U_x \cap X_1)=0$.
Since the neighborhood $U_x$ exists for almost every $x \in X_1$, this implies that $\mu(X_1)=0$, and hence, that $\pi(X_1)=0$, which is a contradiction.
We conclude that any Markov chain generated by the leapfrog algorithm is aperiodic.

Lemma $\ref{invariant}$ shows that PPHMC preserves the target distribution $\pi$.
This, along with $\pi$-irreducibility and aperiodicity, completes the proof \citep{roberts2004general}.
\end{proof}

\subsection{An efficient surrogate smoothing strategy}

One major advantage of HMC methods over traditional approaches is that HMC-proposed states may be distant from the current state but nevertheless have a high probability of acceptance.
This partially relies on the fact that the leapfrog algorithm with smooth energy functions has a local approximation error of order $\mathcal{O}(\epsilon^3)$ (which leads to global error $\mathcal{O}(T\epsilon^3)$, where $T$ is the number of leapfrog steps in a Hamiltonian path).

However, when the potential energy function $U(\tau, q)$ is not differentiable on the whole space this low approximation error can break down.
Indeed, although PPHMC inherits many nice properties from vanilla HMC and RHMC \citep{afshar2015reflection}, this discontinuity of the derivatives of the potential energy across orthants may result in non-negligible loss of accuracy during numerical simulations of the Hamiltonian dynamics.
A careful analysis of the local approximation error of RHMC for potential energy functions with discontinuous first derivatives reveals that it only has an local error rate of order at least $\Omega(\epsilon)$ (see proof in Appendix):

\begin{Proposition}
Given a potential function $V$, we denote by $V^+$ and $V^-$ the restrictions of $V$ on the half-spaces $\{x_1 \ge 0\}$ and $\{x_1 \le 0\}$ and assume that $V^+$ and $V^-$ are smooth up to the boundary of their domains.
If the first derivative with respect to the first component of the potential energy $V(q)$ are discontinuous across the hyper-plane $\{x_1=0\}$ (i.e., $(\partial V^+)/(\partial q_1)$ and $(\partial V^-)/(\partial q_1)$ are not identical on this set), then RHMC on this hyper-plane has a local error of order at least $\Omega(\epsilon)$.
\label{exact}
\end{Proposition}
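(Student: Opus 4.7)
The plan is to exhibit explicit initial data for a single leap-prog (RHMC) step that crosses the hyperplane $\{x_1=0\}$, and to compare the integrator output with the exact Hamiltonian flow to leading order in $\epsilon$. Because the only discontinuity is in $\partial V/\partial q_1$, and because the tangential coordinates experience the usual smooth-leapfrog error $\mathcal{O}(\epsilon^3)$, the analysis reduces to a one-dimensional problem along $q_1$. I choose an initial position $q_0 = c\epsilon$ with a fixed $c \in (0,1)$ and an initial momentum $p_0 = -1$, so the exact trajectory crosses the boundary at a time $t^\star = c\epsilon + \mathcal{O}(\epsilon^2)$ lying strictly inside $(0,\epsilon)$, with $t^\star - \epsilon/2$ of order $\Theta(\epsilon)$ whenever $c \ne 1/2$.

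The leap-prog update delivers the total momentum kick $-(\epsilon/2)\bigl[(V^+)'(q_0) + (V^-)'(q_1)\bigr]$, and since $V^\pm$ are smooth up to the boundary, Taylor expansion gives $-(\epsilon/2)\bigl[(V^+)'(0) + (V^-)'(0)\bigr] + \mathcal{O}(\epsilon^2)$. The exact update splits at $t^\star$ into $-\int_0^{t^\star}(V^+)'(q(s))\,ds - \int_{t^\star}^{\epsilon}(V^-)'(q(s))\,ds$, which expands to $-(V^+)'(0)\,t^\star - (V^-)'(0)(\epsilon - t^\star) + \mathcal{O}(\epsilon^2)$. Subtracting the two yields a local momentum error
\[
\bigl[(V^+)'(0) - (V^-)'(0)\bigr]\bigl(t^\star - \epsilon/2\bigr) + \mathcal{O}(\epsilon^2),
\]
whose leading term is the product of a fixed nonzero constant (by the discontinuity hypothesis) and a quantity of order $\Theta(\epsilon)$, hence $\Omega(\epsilon)$. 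This momentum discrepancy then feeds into position error of the same order in the next step.

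The delicate step is making precise \emph{why} the usual leapfrog cancellation fails. In the smooth setting, the symmetric half-kicks $-(\epsilon/2)V'(q_0)$ and $-(\epsilon/2)V'(q_1)$ mimic the midpoint rule for $\int_0^\epsilon V'(q(s))\,ds$, giving second-order accuracy. When $V'$ jumps at the interior time $t^\star$, those symmetric half-kicks can only reproduce the arithmetic mean $\tfrac{1}{2}[(V^+)'(0)+(V^-)'(0)]$ of the two one-sided derivatives, whereas the true integral weights them by the asymmetric durations $t^\star$ and $\epsilon - t^\star$. This mismatch admits no higher-order cancellation and is what pins the local error at $\Omega(\epsilon)$; the rest of the argument is routine Taylor expansion, and the one-dimensional reduction confirms that the loss is intrinsic to the normal direction rather than an artifact of the coordinate choice.
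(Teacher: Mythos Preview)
Your argument is correct for the case of continuous $V$ and takes a genuinely different route from the paper. You compare the leapfrog momentum update against the exact Hamiltonian flow and extract the leading-order discrepancy $[(V^+)'(0)-(V^-)'(0)](t^\star-\epsilon/2)$, which is $\Theta(\epsilon)$ for generic $c\neq 1/2$; your midpoint-rule explanation for why the usual symmetric cancellation fails is a clean piece of intuition that the paper does not spell out. The paper instead computes the change $\Delta H$ in the Hamiltonian over a single refractive leapfrog step and obtains $\Delta H = (\epsilon_2-\epsilon_1)\bigl[p_1'\,\partial V^-/\partial q_1(x) - p_1\,\partial V^+/\partial q_1(x)\bigr] + \mathcal{O}(\epsilon^2)$, then argues this can vanish for all incoming momenta only when the one-sided derivatives agree. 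The $\Delta H$ framing is more directly tied to what matters for HMC (the acceptance probability depends on $\Delta H$, not on trajectory error per se), while your trajectory-error framing is the standard numerical-ODE viewpoint; the two are equivalent in order since a momentum error of size $\Theta(\epsilon)$ with $p=\mathcal{O}(1)$ forces $\Delta H=\Theta(\epsilon)$.

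One limitation worth flagging: your comparison to the exact flow presumes that flow is classically well-defined, which requires $V$ itself to be continuous across $\{x_1=0\}$, and your momentum-kick formula omits the refraction adjustment $p_1'=\sqrt{p_1^2-2\,dV(x)}$ that RHMC performs when $V$ jumps. The paper's $\Delta H$ calculation carries the refraction term throughout and therefore covers the full generality of the Proposition as stated. For the PPHMC application this does not matter (Assumption~2.3 makes $U$ continuous, so $dV=0$ and your analysis applies verbatim), but to match the Proposition in full you would either need to restrict to continuous $V$ or redo the comparison against a reference dynamics that itself refracts---which is essentially what the $\Delta H$ route sidesteps. Also, your closing remark that the momentum error ``feeds into position error of the same order in the next step'' overstates slightly: the one-step position error is $\Theta(\epsilon^2)$, not $\Theta(\epsilon)$; the $\Omega(\epsilon)$ local error lives in the momentum coordinate (and hence in $H$), which is all you need.
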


Since PPHMC uses RHMC, when the first derivatives of the potential energy are discontinuous, it also has a global error of order $\mathcal{O}(C \epsilon + T\epsilon^3)$, which depends on the number of \emph{critical events} $C$ along a Hamiltonian path (that is, the number of reflection/refraction events).
This makes it difficult to tune the step size $\epsilon$ for optimal acceptance rate, and requires small $\epsilon$, limiting topology exploration.

To alleviate this issue, we propose the use of a surrogate induced Hamiltonian dynamics \cite{strathmann2015gradient,zhang15} with the Hamiltonian $\tilde{H}(\tau,q,p) = \tilde{U}(\tau,q) + K(p)$, where the surrogate potential energy is
\[
\tilde{U}(\tau,q)  = U(\tau,G(q)),\quad G(q) = (g(q_1),\ldots,g(q_n))
\]
and $g(x)$ is some positive and smooth approximation of $|x|$ with vanishing gradient at $x=0$.
One simple example which will be used for the rest of this paper is
\[
 g_{\delta}(x) = \left\{ \begin{array}{ll}
 x, \; & x\geq\delta\\
 \frac{1}{2\delta}(x^2+\delta^2),\; & 0\leq x <\delta\end{array}
 \right.
\]
where $\delta$ will be called the \emph{smoothing threshold}.

Due to the vanishing gradient of $g$, $\tilde{U}$ now has continuous derivatives across orthants.
However, $\tilde{U}$ is no longer continuous across orthants since $g(0)\neq 0$ and we thus employ the refraction technique introduced in \citet{afshar2015reflection} (see Algorithm \ref{CHalgorithm_surr} for more details).
The proposed state $s_\delta^{\ast} = (\tau_\delta^\ast,q_\delta^\ast,p_\delta^\ast)$ at the end of the trajectory is accepted with probability according to the original Hamiltonian, that is, $\min(1,\exp(H(s)-H(s_\delta^{\ast})))$.

By following the same framework proposed in previous sections, we can prove that the resulting sampler  still samples from the exact posterior distribution $\mathcal{P}(\tau,q)$.
A complete treatment, however, requires more technical adjustments and is beyond the scope of the paper.
We will leave this as a subject of future work.

\begin{algorithm}
\caption{Refractive Leap-prog with surrogate}
\label{CHalgorithm_surr}
\begin{algorithmic}
\STATE $p \gets p - \epsilon \nabla \tilde{U}(\tau, q) /2$
\IF{$\text{FirstUpdateEvent}(\tau, q,p, \epsilon) = \emptyset$}
\STATE $q \gets q+\epsilon p$
\ELSE
\STATE $t \gets 0$
\WHILE{$\text{FirstUpdateEvent}(q,p, \epsilon -t) \ne \emptyset$}
\STATE $(q, e, I) \gets \text{FirstUpdateEvent}(\tau, q,p, \epsilon -t)$
\STATE $t \gets t+ e$
\STATE $\tau' \sim Z(\mathcal{N}(\tau, q))$
\STATE $\Delta E \gets \tilde{U}(\tau',q) - \tilde{U}(\tau,q)$
\IF{$\|p_I\|^2 > 2\Delta E$}
\STATE $p_I \gets \sqrt{\|p_I\|^2 - 2\Delta E}\cdot \dfrac{- p_I}{\|p_I\|}$
\STATE $\tau \gets \tau'$
\ELSE
\STATE $p_I \gets - p_I$
\ENDIF
\ENDWHILE
\STATE $q \gets q+(\epsilon-t)p$
\ENDIF
\STATE $p \gets p - \epsilon \nabla \tilde{U}(\tau, q) /2$\\
\end{algorithmic}
\label{algWithSurrogate}
\end{algorithm}

As we will illustrate later, compared to the exact potential energy, the continuity of the derivative of the surrogate potential across orthants dramatically reduces the discretization error and allows for high acceptance probability with relatively large step size.

%

\section{Experiments}\label{sec:exp}
In this section, we demonstrate the validity and efficiency of our PPHMC method by an application to Bayesian phylogenetic inference.
We compared our PPHMC implementations to industry-standard MrBayes 3.2.5, which uses MCMC to sample phylogenetic trees \cite{Ronquist2012-hi}.
We concentrated on the most challenging part: sampling jointly for the branch lengths and tree topologies, and assumed other parameters (e.g., substitution model, hyper-parameters for the priors) are fixed.
More specifically, for all of our experiments we continued to assume the Jukes-Cantor model of DNA substitution and placed a uniform prior on the tree topology $\tau \sim Z\left(\mathcal{T}_N\right)$ with branch lengths i.i.d.\ $q_i \sim \text{Exponential}\left(\lambda = 10\right)$, as done by others when measuring the performance of MCMC algorithms for Bayesian phylogenetics \citep[e.g.,][]{whidden2015}.
As mentioned earlier, although in the theoretical development we assumed that the lengths of the pendant edges are bounded from below by a positive constant $e_0$ to ensure that the likelihood stays positive on the whole tree space, this condition is not necessary in practice since the Hamiltonian dynamics guide the particles away from regions with zero likelihood (i.e., the region with $U = \infty$).
We validate the algorithm through two independent implementations in open-source software:
\begin{enumerate}
\item a Scala version available at {\small \url{https://github.com/armanbilge/phyloHMC}} that uses the Phylogenetic Likelihood Library\footnote{\url{https://github.com/xflouris/libpll}} \citep{libpll}, and
\item a Python version available at {\small \url{https://github.com/zcrabbit/PhyloInfer}} that uses the ETE toolkit \citep{ete3} and Biopython \citep{bio}.
\end{enumerate}

\subsection{Simulated data}
As a proof of concept, we first tested our PPHMC method on a simulated data set.
We used a random unrooted tree with $N=50$ leaves sampled from the aforementioned prior.
$1000$ nucleotide observations for each leaf were then generated by simulating the continuous-time Markov model along the tree.
This moderate data set provided enough information for model inference while allowing for a relatively rich posterior distribution to sample from.

We ran MrBayes for $10^7$ iterations and sampled every $1000$ iterations after a burn-in period of the first $25\%$ iterations to establish a ground truth for the posterior distribution.
For PPHMC, we set the step size $\epsilon=0.0015$ and smoothing threshold $\delta=0.003$ to give an overall acceptance rate of about $\alpha = 0.68$ and set the number of leap-prog steps $T=200$.
We then ran PPHMC for $10,000$ iterations with a burn-in of $25\%$.
We saw that PPHMC indeed samples from the correct posterior distribution (see Figure~\ref{fig:posterior} in the Appendix).

%

\subsection{Empirical data}

We also analyzed an empirical data set labeled DS4 by \citet{whidden2015} that has become a standard benchmark for MCMC algorithms for Bayesian phylogenetics since \citet{Lakner2008-wu}.
DS4 consists of $1137$ nucleotide observations per leaf from $N=41$ leaves representing different species of fungi.
Notably, only $554$ of these observations are complete; the remaining $583$ are missing a character for one or more leaves so the likelihood is marginalized over all possible characters.
\citet{whidden2015} observed that the posterior distribution for DS4 features high-probability trees separated by paths through low-probability trees and thus denoted it a ``peaky'' data set that was difficult to sample from using MrBayes.

To find the optimal choice of tuning parameters for DS4, we did a grid search on the space of step size $\epsilon$ and the smoothing threshold--step size ratio $\delta / \epsilon$.
The number of leap-prog steps $T$ was adjusted to keep the total simulation time $\epsilon T$ fixed.
For each choice of parameters, we estimated the expected acceptance rate $\alpha$ by averaging over 20 proposals from the PPHMC transition kernel per state for 100 states sampled from the posterior in a previous, well-mixed run.
This strategy enabled us to obtain an accurate estimate of the average acceptance rate without needing to account for the different mixing rates in a full PPHMC run due to the various settings for the tuning parameters.

The results suggest that choosing $\delta \approx 2\epsilon$ maximizes the acceptance probability (Figure~\ref{fig:eps_vs_alpha}b).
Furthermore, when aiming for the optimal acceptance rate of $\alpha = 0.65$ \cite{neal2011mcmc}, the use of the surrogate function enables a choice of step size $\epsilon$ nearly 10 times greater than otherwise.
In practice, this means that an equivalent proposal requires less leap-prog steps and gives a more efficient sampling algorithm.

To see the difference this makes in practice, we ran long trajectories for exact and surrogate-smoothed PPHMC with a relatively large step size $\epsilon=0.0008$.
Indeed, we found that the surrogate enables very long trajectories and large number of topology transformations (Figure~\ref{fig:critical}a,c).

\begin{figure}[h!]
\centering
\includegraphics[width=\linewidth]{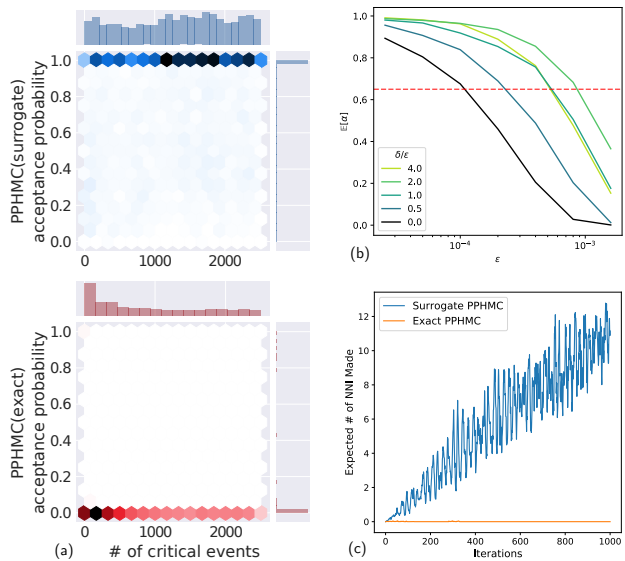}
  \caption{\
(a)
A hexbin plot comparison showing that the surrogate-smoothed PPHMC achieves much higher acceptance rate and longer paths than exact PPHMC on the DS4 data set.
(b)
Average acceptance rate $\alpha$ for different choices of step size $\epsilon$ and smoothing threshold $\delta$ for phylogenetic HMC applied to DS4, where
$\epsilon T = 0.08$.
The black series ($\delta / \epsilon = 0.0$) is equivalent to
exact PPHMC.
The dashed red line indicates the
acceptance rate $\alpha = 0.65$.
(c)
Expected number of NNI moves made by both exact and surrogate-smoothed PPHMC. Results are obtained by averaging 100 HMC iterations with long trajectories ($T=1000$), starting from the same posterior sample.
}
\label{fig:eps_vs_alpha}
\label{fig:critical}
\end{figure}

%

\section{Conclusion}
Sophisticated techniques for sampling posteriors using HMC have thus far been restricted to manifolds with boundary.
To address this limitation, we have developed ``PPHMC,'' which is the first extension of HMC to a space with intricate combinatorial structure.
The corresponding integrator makes a random choice among alternatives when encountering a boundary.
To prove ergodicity, we extend familiar elements of HMC proofs to this probabilistic path setting.
We develop a smoothing surrogate function that enables long HMC paths with many boundary transitions across which the posterior is not differentiable.
Our surrogate method enables high acceptance probability for RHMC \citep{afshar2015reflection} in the case of potential functions with discontinuous derivatives; this aspect of our work is independent of the probabilistic nature of PPHMC.
Our implementation shows good performance on both simulated and real data.
There are many opportunities for future development, including extending the theory to other classes of combinatorially-described spaces and surrogate functions, developing adaptive path length algorithms, as well as extending our implementation for phylogenetics to sample other mutation model and demographic parameters along with topologies.


\newpage

\bibliographystyle{plainnat}
{\small
\bibliography{biblio}
}

\clearpage

\beginsupplement

\section{Appendix}

\subsection{Properties of the phylogenetic posterior distribution}

\begin{proof}[Assumption $\ref{as-potential}$ for the phylogenetic posterior distribution]

Recall that $L(\tau, q)$ denotes the likelihood function of the tree $T=(\tau, q)$, we have
\[
U(\tau, q) = - \log L(\tau, q) - \log \pi_0(\tau, q)
\]
Since $-\log \pi_0(\tau, q)$ is assumed to satisfy the Assumption $\ref{as-potential}$, we just need to prove that the phylogenetic likelihood function is smooth while each orthant and is continuous on the whole space.

Without loss of generality, we consider the case when a single branch length of some edge $e$ is contracted to zero.
To investigate the changes in the likelihood function and its derivatives, we first fix all other branches, partition the set of all extensions of $\psi$ according to their labels at the end points of $e$, and split $E(T)$ into two sets of edges $E_{\operatorname{left}}$ and $E_{\operatorname{right}}$ corresponding to the location of the edges with respect to $e$.
The likelihood function of the tree $T=(\tau, q)$ can be rewritten as
\begin{eqnarray*}
L(T) &=& \prod_{s=1}^S{\sum_{ij}{\sum_{a \in \mathcal{A}_{ij}}{ \left( \prod_{(u,v) \in E_{\operatorname{left}} }{P^{uv}_{a_ua_v}( q_{uv})}\right) }}}\\
&& \h \ \ \  \times ~ \eta(i) P^{e}_{ij}(t) \times \left( \prod_{(u,v) \in E_{\operatorname{right}} }{P^{uv}_{a_ua_v}( q_{uv})}\right)
\end{eqnarray*}
where $t$ is the branch length of $e$, $\eta$ is the stationary distribution, $\mathcal{A}_{ij}$ denotes the set of all extensions of $\psi$ for which the labels at the left end point and the right end point of $e$ are $i$ and $j$, respectively.
By grouping the products over $E_{\operatorname{left}}$ and $E_{\operatorname{right}}$, the stationary frequency $\eta(\cdot)$, and the sum over $a$ in a single term $b_{ij}^s$, we can define the one-dimensional log-likelihood function as a univariate function of the branch length of $e$
\[
L_T(t)=\prod_{s=1}^S{\left(\sum_{ij}{b^s_{ij}P^{e}_{ij}(t)}\right)}.
\]
Consider the tree $T'$ obtained by collapsing edge $e$ of the tree $T$ to zero.
The likelihood of $T'$ can be written as
\[
L(T') = \prod_{s=1}^S{\left(\sum_{i=j}{b^s_{ij}P^{e}_{ij}(0)}\right)} =\prod_{s=1}^S{\left(\sum_{i}{b^s_{ii}}\right)}
\]
since $P_{ij}(0) = 1$ if $i=j$ and $0$ otherwise.
Thus
\[
\lim_{t\to 0}{L_T(t)} = L(T').
\]
Since this is true for all $(\tau, q)$ and $t \in E(\tau, q)$, we deduce that the likelihood function is continuous up to the boundary of each orthant, and thus, is continuous on the whole tree space.
Moreover, using the same arguments, we can prove that likelihood function is smooth up to the boundary of each orthant.

Now fixing all but two branch lengths $t_e, t_f$, the likelihood can be rewritten as
\[
L_T(t_e, t_f)=\prod_{s=1}^S{\left(\sum_{ij}{b^s_{ij}(t_e)P^{f}_{ij}(t_f)}\right)}
\]
and the derivative of the log likelihood is
\[
\frac{1}{L_T(t_e, t_f) }\frac{\partial L_T}{\partial t_f}(t_e, t_f)= \sum_{s=1}^S{\frac{\sum_{ij}{b^s_{ij}(t_e)(P^{f}_{ij})'(t_f)}}{\sum_{ij}{b^s_{ij}(t_e)P^{f}_{ij}(t_f)}}}.
\]
By using the same argument as above, we have that $b_{ij}^s (t_e)$ is continuous in $t_e$ up to zero and so
\[
\lim_{t_e \to 0}\frac{1}{L_T(t_e, t_f) }\frac{\partial L_T}{\partial t_f}(t_e, t_f) = \frac{1}{L(T') }\frac{\partial L}{\partial t_f}(T').
\]
Thus, when a Hamiltonian particle crosses a boundary between orthants, partial derivatives of the energy function with respect to positive branch lengths are continuous.
\end{proof}

\subsection{Theoretical properties of the leap-prog integrator}

\begin{proof}[Proof of Lemma $\ref{lem:finite}$]
Note that for PPHMC,  in a single leap-prog step $\gamma$ of finite size $\epsilon$, the algorithm only re-evaluates the gradient of the energy function at the end of the step when the final position $q'$ has been fixed, and changes in topology on the path have no effect on the changes of position and momentum.
Thus, the projection $\tilde \gamma$ of $\gamma$ to the $(q,p)$ space is just a deterministic reflected Hamiltonian path.
As a result, for any $s^{(1)} =(\tau^{(1)}, q^{(1)}, p^{(1)}), s^{(2)} =(\tau^{(2)}, q^{(2)}, p^{(2)}) \in R(s)$, we have $(q^{(1)}, p^{(1)})= (q^{(2)}, p^{(2)})$.
This, along with the fact that set of topologies is countable, implies that $R(s)$ is countable.

Now denote by $\{t^{(1)} < t^{(2)} < \ldots < t^{(n)} < \ldots \leq \epsilon \}$ the set of time points at which $\tilde \gamma$ hits the boundary.
Since this set is strictly increasing, it is countable.
Moreover, the $\tau$-component of $\gamma$ is only updated with finite choices at $\{t^{(i)}\}$.
This implies that $K(s)$ is countable.

Finally, consider any leap-prog step $\gamma$ that connects $s$ and $s'$ through infinite number of topological changes.
We note that at each $t^{(i)}$, the next topology is chosen among $x^{(i)} \ge 2$ neighboring topologies.
Denote by $P_{\gamma}( s, s')$ the probability of moving from $s$ to $s'$ via path $\gamma$, we have
\[
P_{\gamma}(s, s') \le \prod_{i=1}^{\infty}{\frac{1}{x^{(i)}}}= 0.
\]
Since $K(s)$ is countable, we deduce that $P_{\infty}( s, s')=0$.
\end{proof}

\begin{proof}[Proof of Lemma $\ref{conserve}$]
Consider any possible path $\gamma$ that connects $s$ and $s'$. By definition, one can find a sequence of augmented states $(s=s^{(0)}, s^{(1)}, s^{(2)}, \ldots, s^{(k)} = s' )$ such that $\gamma$ can be decomposed into segments on which the topology does not change.
From standard result about Hamiltonian dynamics, the Hamiltonian is constant on each segment.

For PPHMC, since the potential energy is continuous across the boundary and the magnitude of the momentum does not change when moving from one orthant to another one, we deduce that the Hamiltonian is constant along that path.

Similarly, for PPHMC with surrogates, the algorithm is designed in such a way that any changes in potential energy is balanced by a change in momentum, which conserves the total energy from one segment to another.
We also deduce that the Hamiltonian is constant along the whole path.
\end{proof}

\begin{proof}[Proof of Lemma $\ref{reverse}$]
Define $\sigma(\tau, q, p):=(\tau, q, -p)$.
Consider any possible leap-prog step $\gamma$ that connects $s$ and $s'$; say the sequences of augmented states $(s=s^{(0)}, s^{(1)}, s^{(2)}, \ldots, s^{(k)} = s' )$, topologies $(\tau=\tau^{(0)}, \tau^{(1)}, \tau^{(2)}, \ldots, \tau^{(k)} = \tau' )$ and times $(t=t^{(0)}, t^{(1)}, t^{(2)}, \ldots,t^{(k)} = t' )$ decompose $\gamma$ into segments on which the topology is unchanged.
Denote by $P_{\gamma}( s, s')$ the probability of moving from $s$ to $s'$ via path $\gamma$, we have
\begin{align*}
P_{\gamma}( s, s') &= \prod_{i}{\mathbb{P}(s^{(i+1)} \mid s^{(i)}, t^{(i+1)} -t^{(i)})} \\
&\h  \times ~\prod_{j}{\mathbb{P}(\tau^{(j+1)} \mid\tau^{(j)})},
\label{eqnprob}
\end{align*}
where each sub-step of the algorithm is a leapfrog update ($\phi^{(i)}$) with some momentum reversing ($\sigma^{(i)}$), that is $s^{(i+1)} = \sigma^{(i)}(\phi^{(i)}(s^{(i)}))$  and $\sigma^{(i)}$ is a map that changes the sign of some momentum coordinates.

If we start the dynamics at $\sigma(s^{(i+1)})$, then since the particle is crossing the boundary, the momenta corresponding to the crossing coordinates are immediately negated and the system is instantly moved to the augmented state
\[
\sigma^{(i)} \sigma(s^{(i+1)}) = \sigma \sigma^{(i)}(s^{(i+1)}) = \sigma(\phi^{(i)}(s^{(i)})).
\]
A standard result about reversibility of Reflective Hamiltonian dynamics implies that the system starting at $\sigma(\phi^{(i)}(s^{(i)}))$ will end at $\sigma(s^{(i)})$ after the same period of time $t^{(i+1)}-t^{(i)}$.
We deduce that
\begin{align*}
&\mathbb{P}(s^{(i+1)} \mid s^{(i)}, t^{(i+1)} -t^{(i)}))\\
=~ &\mathbb{P}(\sigma(s^{(i)}) \mid \sigma(s^{(i+1)}), t^{(i+1)} -t^{(i)})).
\end{align*}
On the other hand, at time $t^{(j)}$, $(\tau^{(j)}, q^{(j)})$ and $(\tau^{(j+1)}, q^{(j)})$ are neighboring topologies, hence
\[
\begin{split}
& \mathbb{P}(\tau^{(j+1)} \mid\tau^{(j)}) = \frac{1}{|\mathcal{N}(\tau^{(j)}, q^{(j)})|} =  \\
& \qquad \frac{1}{|\mathcal{N}(\tau^{(j+1)}, q^{(j)})|} = \mathbb{P}(\tau^{(j)} \mid\tau^{(j+1)}).
\end{split}
\]
Therefore
\[
P_{\gamma}(s, s') = P_{\gamma}(\sigma(s'),  \sigma(s))
\]
for any path $\gamma$.
This completes the proof.
\end{proof}

\begin{proof}[Proof of Lemma $\ref{vol}$]

We denote by $C$ the set of pairs $(s, s') \in A \times B$ such that $P(s, s')>0$.
Let us consider any possible leap-prog step $\gamma$ that connects $s \in A$ and $s' \in B$ crossing a finite number of boundaries and the sequences of augmented states $(s=s^{(0)}, s^{(1)}, s^{(2)}, \ldots, s^{(k)} = s' )$, topologies $(\tau=\tau^{(0)}, \tau^{(1)}, \tau^{(2)}, \ldots, \tau^{(k)} = \tau' )$, times $(t=t^{(0)}, t^{(1)}, t^{(2)}, \ldots,t^{(k)} = t' )$ and indices $\alpha = (\alpha^{(0)}, \alpha^{(1)}, \ldots, \alpha^{(k)})$ (each $\alpha^{(i)}$ is a vector of  $\pm 1$ entries characterizing the coordinates crossing zero in each sub-step) that decompose $\gamma$ into segments on which the topology is unchanged.
By grouping the members of $C$ by the value of $\alpha$ and $\omega$, we have:
\[
C= \bigcup_{(\alpha, \omega)}{C_{\alpha, \omega}}.
\]
Because there will typically be many paths between $s$ and $s'$, the $C_{\alpha, \omega}$ need not be disjoint.
However, we can modify the (countable number of) sets by picking one set for each $(s, s')$ and dropping it from the rest, making a collection of disjoint sets $\{ C_j \}$ such that each $C_j$ is a subset of some $C_{\alpha, \omega}$ and
\[
C= \bigcup_{j \in J}{C_j}.
\]

We will write $s \in A_j(s')$ and $s' \in B_j(s)$ if $(s, s') \in C_j$ and denote
\[
A_{j}=\bigcup_{s' \in B}{A_{j}(s')} \h \text{and} \h B_{j}=\bigcup_{s \in A}{B_{j}(s)}.
\]
We note that although the leap-prog algorithm is stochastic, if $(\alpha, \omega)$ has been pre-specified, the whole path depends deterministically on the initial momentum.
Thus, by denoting the projection of $C_{\alpha, \omega}$ to $A$ by $A_{\alpha, \omega}$, we have that the transformation $\phi_{\alpha, \omega}$ that maps $s$ to $s'$ is well-defined on $A_{\alpha, \omega}$.
Since the projection of the particles (in a single leap-prog step) to the $(q,p)$ space is exactly Reflective Hamiltonian Monte Carlo on $\mathbb{R}^n_{\ge 0}$.
Using Lemma 1, Lemma 2 and Theorem 1 in  \cite{afshar2015reflection}, we deduce that the determinant of the Jacobian of $\phi_{\alpha, \omega}$ is 1.

Now consider any $j\in J$ such that $C_j \subset C_{\alpha, \omega}$.
Because $P(s,s') = P(s',s)$ for all $s, s' \in \mathbb{T}$ and the determinant of the Jacobian of $\phi_{\alpha, \omega}$ is 1, we have
\begin{align}
 \int_{B_j} {\sum_{s \in A_j(s')}{P(s',s)} ~ds'} &=  \int_{B_j} {{P(s',\phi_{\alpha, \omega}^{-1}(s'))} ~ds'}\nonumber \\
&= \int_{A_j} {P(\phi_{\alpha, \omega}(s),s) ~ds}\nonumber \\
&= \int_{A_j} {P(s, \phi_{\alpha, \omega}(s)) ~ds}\nonumber \\
&= \int_{A_j} {\sum_{s' \in B_j(s)}{P(s,s')} ~ds}.
\label{Ai}
\end{align}

Denote
\[
A^*=\bigcup_{j}{A_j} \h \text{and} \h B^*=\bigcup_{j}{B_j}.
\]
Summing $\eqref{Ai}$ over all possible values of $j$ gives
\[
\int_{B^*} {\sum_{s \in A(s')}{P(s',s)} ~ds'} = \int_{A^*} {\sum_{s' \in B(s)}{P(s,s')} ~ds}.
\]
Moreover, we note that for $s \notin A^*$, $B(s)=\emptyset$.
Similarly, if  $s' \notin B^*$, $A(s')=\emptyset$.
We deduce that
\[
\int_{B} {\sum_{s \in A(s')}{P(s',s)} ~ds'} = \int_{A} {\sum_{s' \in B(s)}{P(s,s')} ~ds}.
\]
\end{proof}

\begin{proof}[Proof of Lemma $\ref{eqb}$]
By definition of $k$, for any state $(\tau', q') \in B$, we can find a sequence of topologies $(\tau=\tau^{(0)}, \tau^{(1)}, \tau^{(2)}, \ldots, \tau^{(k)} = \tau' )$ for some $l \le k$ such that $\tau^{(i)}$ and $\tau^{(i+1)}$ are adjacent topologies.
From the construction of the state space, let $(\tau^{(i)}, q^{(i)})$ denote a state on the boundary between the orthants for the two topologies $\tau^{(i)}$ and $\tau^{(i+1)}$.
Moreover, since $(\tau^{(i)}, q^{(i)})$ and $(\tau^{(i+1)}, q^{(i+1)})$ lie in the same orthant, we can find momentum values $p^{(i)}$ and $(p^{(i)})'$ such that
\[
P((\tau^{(i)}, q^{(i)}, p^{(i)}) \to (\tau^{(i+1)}, q^{(i+1)}, (p^{(i+1)})')) >0
\]
for all $i$.
That is, we can get from $(\tau^{(i)}, q^{(i)}, p^{(i)}) $ to $ (\tau^{(i+1)}, q^{(i+1)}, (p^{(i+1)})')$ by a sequence of leapfrog steps $\Sigma^{(i)}$ with length $T$.
By joining the $\Sigma^{(i)}$'s, we obtain a path $\Sigma$ of $k$ PPHMC steps that connects $(\tau^{(0)}, q^{(0)})$ and $(\tau',q')$.
\end{proof}

\begin{proof}[Proof of Lemma $\ref{lips}$]

For a path $\Sigma$ of $k$ PPHMC steps connecting $(\tau^{(0)}, q^{(0)})$ and $(\tau',q')$, we define $F_{\Sigma} = \{(\tau^{(0)}, q^{(0)}), (\tau^{(1)}, q^{(1)}), \ldots, (\tau^{(n_{\omega})}, q^{(n_{\omega})})\}$, where $(\tau^{(i)}, q^{(i)})$ denotes the state on $\Sigma$ that joins the topologies $\tau^{(i)}$ and $\tau^{(i+1)}$.
We first note that although our leap-prog algorithm is stochastic, if the sequence of topologies crossed by a path $\Sigma$ has been pre-specified, the whole path depends deterministically on the sequence of momenta $p=(p^{(0)}, \ldots, p^{(m)})$ along $\Sigma$.
Thus, the functions
\[
\phi_{i, \omega}(p):= q^{(i)} \h \forall p \in I_{B, \omega},
\]
are well-defined.

We will prove that $\phi_{n_{\omega}, \omega}$ is Lipschitz by induction on $n_{\omega}$. For the base case $n_{\omega}=0$, the sequence $\omega$ is of length 1, which implies no topological changes along the path.
The leap-prog algorithm reduces to the baseline leapfrog algorithm and from standard results about HMC on Euclidean spaces \citep[see, e.g.,][]{cances2007theoretical}, we deduce that $\phi_{1, \omega}$ is Lipschitz.

Now assume that the results holds true for $n_{\omega} = n$.
Consider a sequence $\omega$ of length $n+1$.
For all $(\tau', q') \in B_{\omega}$, let $\Sigma(\tau', q')$ be a $(k,T)$-path connecting $(\tau^{(0)}, q^{(0)})$ and $(\tau', q')$.
We recall that
\begin{align*}
&F_{\Sigma(\tau', q')} \\
&= \{(\tau^{(0)}, q^{(0)}), (\tau^{(1)}, \phi_{1, \omega}(p)), \ldots, (\tau^{(n_{\omega})}, \phi_{n_{\omega}, \omega}(p))\},
\end{align*}
where $\phi_{n_{\omega}, \omega}(p))=(\tau', q')$, is the set of states that join the topologies on the path $\Sigma(\tau', q')$.

Define $\omega' = \{\tau^{(0)}, \tau^{(1)}, \ldots, \tau^{(n_{\omega}-1)}\}$ and $B'=\phi_{n_{\omega}-1}(I_{B,\omega})$, the induction hypothesis implies that the function $\phi_{n_{ \omega'}, \omega'} = \phi_{n_{\omega}-1, \omega}$ is Lipschitz on $I_{B', \omega'}=I_{B, \omega}$.

On the other hand, since $(\tau^{(n)}, q^{(n)})$ and  $(\tau^{(n+1)}, q^{(n+1)})$ belong to the same topology, the base case implies that $q^{(n+1)}$ is a Lipschitz function in $p$ and $q^{(n)} = \phi_{n_{\omega}-1, \omega}(p)$.
Since compositions of Lipschitz functions are also Lipschitz, we deduce that $\phi_{n_{\omega}, \omega}$ is Lipschitz.

Since Lipschitz functions map zero measure sets to zero measure sets \citep[see, e.g., Section 2.2, Theorem 2 and Section 2.4, Theorem 1 of][]{evans2015measure}, this implies $\mu(B_{\omega})=0$ which completes the proof.
\end{proof}

\subsection{Ergodicity of PPHMC}

\begin{proof}[Proof of Lemma $\ref{invariant}$ ]

We denote
\begin{align*}
\nu(\tau, q,p) &= \frac{1}{Z} \exp(-U(\tau, q)) \exp \left(-K(p)\right) \\
&= \frac{1}{Z} \exp(-H(\tau, q,p))
\end{align*}
and refer to it as the canonical distribution.

It is straightforward to check that for all $s, s' \in \mathbb{T}$, we have $\nu(s)r(s, s') = \nu(s')r(s' , s)$.
Lemma $\ref{vol}$ implies that
\[
P(s, ds') ds = P(s', ds) ds'
\]
in term of measures.
This gives the detailed balance condition
\begin{align*}
&\int_A \int_B {\nu(s)r(s, s') P(s, ds') ds} \\
= &\int_B \int_A {\nu(s')r(s' , s) P(s', ds) ds'}
\end{align*}
for all $A, B \subset \mathbb{T}$.

We deduce that every update step of the second step of PPHMC satisfies detailed balance with respect to $\nu$ and hence, leaves $\nu$ invariant.
On the other hand, since $\nu$ is a function of $|p|$, the negation of the momentum $p$ at the end of the second step also fixes $\nu$.
Similarly, in the fist step, $p$ is drawn from its correct conditional distribution given $q$ and thus leaves $\nu$ invariant.

Since the target distribution $\pi$ is the marginal distribution of $\nu$ on the position variables, PPHMC also leaves $\pi$ invariant.
\end{proof}

\subsection{Approximation error of reflective leapfrog algorithm}

In this section, we investigate the local approximation error of the reflective leapfrog algorithm \cite{afshar2015reflection} without using surrogates.
Recall that $V^+$ and $V^-$ are the restrictions of the potential function $V$ on the sets $\{x_1 \ge 0\}$ and $\{x_1 \le 0\}$,
and we assume that $V^+$ and $V^-$ are smooth up to the boundary of their domains.

Consider a reflective leapfrog step with potential energy function $V$ starting at $(q^{(0)}, p^{(0)})$ (with $q_1^{(0)}>0$), ending at $(q^{(1)}, p^{(1)})$ (with $q_1^{(1)}<0$) and hitting the boundary at $x$ (with $x_1=0$, i.e., a refraction event happens on the hyper-plane of the first component).

\begin{proof}[Proof of Proposition \ref{exact}]
Let $p$ and $p'$ denote the half-step momentum of a leapfrog step before and after the refraction events, respectively.
Recall that in a leapfrog approximation with refraction at $x_1=0$, we have
\[
p_i^{(0)} = p_i + \frac{\epsilon}{2} \frac{\partial V}{\partial q_i}(q^{(0)}), \h p_i^{(1)} = p_i' - \frac{\epsilon}{2} \frac{\partial V}{\partial q_i}(q^{(1)}),
\]
where
\[
p_1' = \sqrt{p_1^2 - 2dV(x)},
\]
$p_i' = p_i$ for $i>1$, and $dV(x) = V^-(x) - V^+(x)$ denotes the change in potential energy across the hyper-plane.

The change in kinetic energy after this leapfrog step is
\begin{align*}
\Delta K = & -d V(x) - \frac{\epsilon}{2} \sum_i{ \left(p_i\frac{\partial V}{\partial q_i}(q^{(0)}) + p_i' \frac{\partial V}{\partial q_i}(q^{(1)}) \right)} \\
& \quad \ +  \frac{\epsilon^2}{8} \sum_i{\left(\left(\frac{\partial V}{\partial q_i}(q^{(1)}) \right )^2- \left(\frac{\partial V}{\partial q_i}(q^{(0)})\right)^2 \right)}.
\end{align*}
We can bound the second-order term by
\begin{align*}
&\left(\frac{\partial V}{\partial q_i}(q^{(1)}) \right )^2- \left(\frac{\partial V}{\partial q_i}(q^{(0)})\right)^2 \\
&\qquad = 2 \int_{0}^{\epsilon}{\frac{\partial V}{\partial q_i}(q^{(0)} + tp) \frac{\partial^2 V}{\partial q_i^2}(q^{(0)} + tp) p_i ~dt}\\
&\qquad = \mathcal{O}(\epsilon) \cdot \sup_{z, W=V^+, V^-}{\frac{\partial W}{\partial q_i}(z) \frac{\partial^2 W}{\partial q_i^2}(z)}.
\end{align*}
On the other hand for the potential energy,
\begin{align*}
\Delta V &= V(q^{(1)}) - V(q^{(0)}) \\
&= V(q^{(1)}) - V^-(x) + dV(x) +  V^+(x) - V(q^{(0)}) \\
& = dV(x) +  \int_{\epsilon_1}^{\epsilon}{\nabla V(q^{(0)} + tp) \cdot p ~dt}\\
& \h +\int_{0}^{\epsilon_1}{\nabla V(q^{(0)} + tp) \cdot p ~dt}
\end{align*}
where $\epsilon_1$ and $\epsilon_2 := \epsilon - \epsilon_1$  denote the integration times before and after refraction.
By the trapezoid rule for integration,
\begin{align*}
\Delta V = dV(x) &+ \sum_{i>1} \frac{\epsilon}{2}\left(p_i\frac{\partial V}{\partial q_i}(q^{(0)}) + p_i'\frac{\partial V}{\partial q_i}(q^{(1)})  \right) \\
&+ \frac{p_1' \epsilon_2}{2}\frac{\partial V}{\partial q_1}(q^{(1)}) +\frac{p_1' \epsilon_2}{2}\frac{\partial V^-}{\partial q_1}(x) \\
&+~ \frac{p_1\epsilon_1}{2}\frac{\partial V}{\partial q_1}(q^{(0)}) +\frac{p_1 \epsilon_1}{2}\frac{\partial V^+}{\partial q_1}(x)\\
&+ \mathcal{O}(\epsilon^3) \cdot \sup_z {\sum_{i, W=V^+, V^-}{\left(\frac{\partial^3 W}{\partial q_i^3}(z) \right) }}.
\end{align*}
We recall that the error of the trapezoid rule on $[a, b]$ with resolution $h$ is a constant multiple of $h^2(b-a)$, which is of order $\epsilon^3$ in our case.
We deduce that
\begin{align*}
\Delta H= \Delta V + \Delta K = & -\frac{p_1' \epsilon_1}{2}\frac{\partial V}{\partial q_1}(q^{(1)})  +\frac{p_1' \epsilon_2}{2}\frac{\partial V^-}{\partial q_1}(x)\\
& -\frac{p_1 \epsilon_2}{2}\frac{\partial V}{\partial q_1}(q^{(0)})  +\frac{p_1 \epsilon_1}{2}\frac{\partial V^+}{\partial q_1}(x)  \\
&+ \mathcal{O}(\epsilon^3).\end{align*}
Using Taylor expansion, we have
\[
\frac{\partial V}{\partial q_1}(q^{(1)}) = \frac{\partial V^-}{\partial q_1}(x) + \mathcal{O}(\epsilon),
\]
and
\[
\frac{\partial V}{\partial q_1}(q^{(0)})= \frac{\partial V^-}{\partial q_1}(x)  +  \mathcal{O}(\epsilon).
\]
This implies
\[
\Delta H =(\epsilon_2 - \epsilon_1) \left(p_1' \frac{\partial V^-}{\partial q_1}(x) - p_1\frac{\partial V^+}{\partial q_1}(x) \right)  + \mathcal{O}(\epsilon^2).
\]
In general, there is no dependency between $\epsilon_1$ and $\epsilon_2$, and the only cases where $\Delta H$ is not of order $\mathcal{O}(\epsilon)$ are when
\[
 \sqrt{p_1^2 - 2dV(x)}\, \frac{\partial V^-}{\partial q_1}(x) - p_1\frac{\partial V^+}{\partial q_1}(x) =0.
\]
In order for this to be true for all $p$, we need to have either
\[
dV(x)=0 \h \text{and} \h \frac{\partial V^-}{\partial q_1}(x) = \frac{\partial V^+}{\partial q_1}(x),
\]
or
\[
\frac{\partial V^-}{\partial q_1}(x) = \frac{\partial V^+}{\partial q_1}(x) = 0.
\]
In both cases, the first derivative of $V$ with respect to the first component must be continuous.
This completes the proof.
\end{proof}

\subsection{Estimated posterior tree distributions for the simulated data}

\begin{figure}[h!]
\centering

\includegraphics[width=\linewidth]{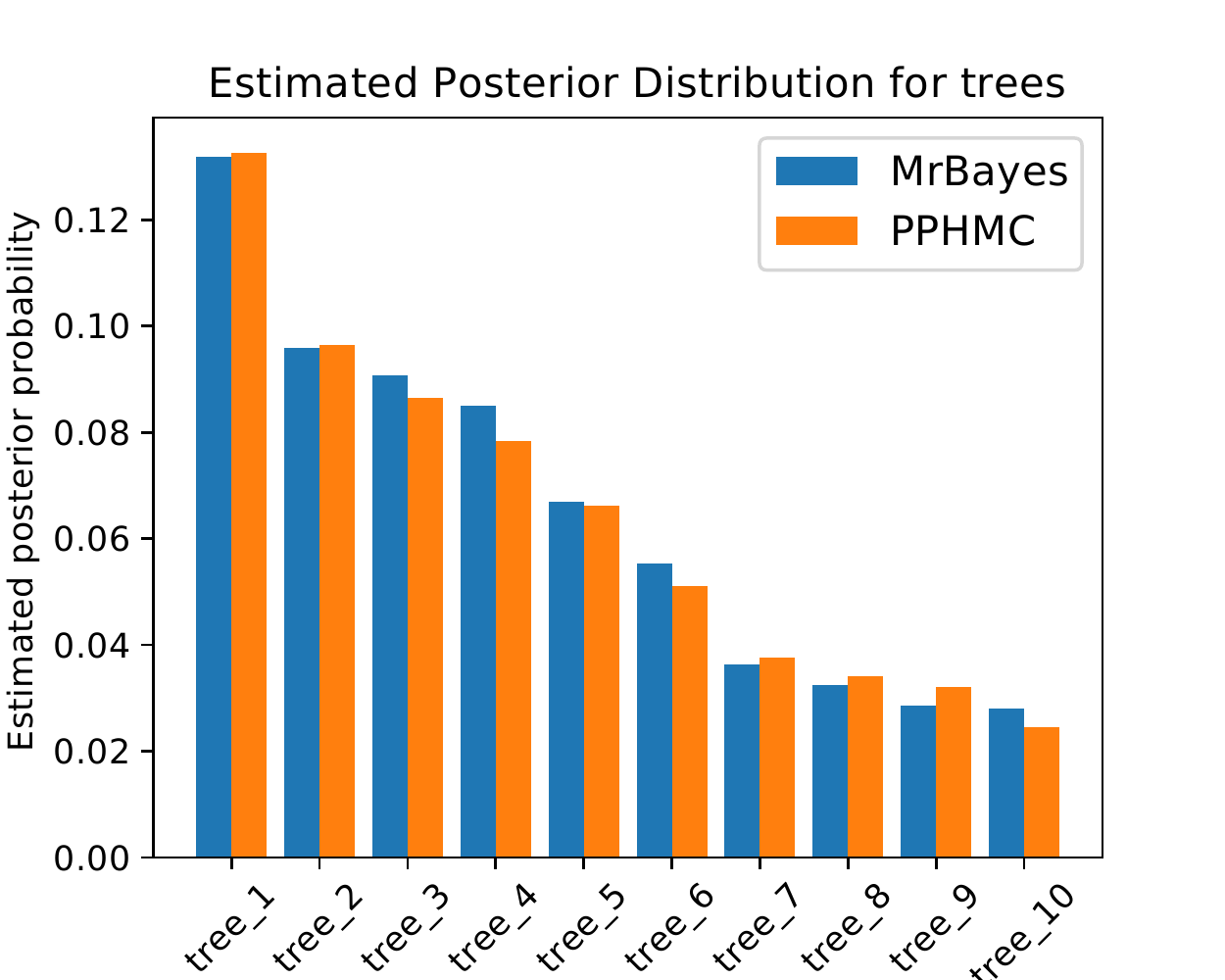}
\caption{Estimated posterior distributions for the top $10$ trees from the ground truth for MrBayes (blue) and PPHMC (orange), respectively.}\label{fig:posterior}
\end{figure}

\subsection{Coordinate systems for branch lengths on trees}

In this section we verify Assumption~\ref{as-coordinate} for phylogenetic trees.
Further explanation of the framework used here can be found in \citep{Semple2003-em,Bryant2004-cw}.

Assume we are considering phylogenetic trees on $N$ leaves, and that those leaves have labels $[N] := \{1, \ldots, N\}$.
Every possible edge in such a phylogenetic tree can be described by its corresponding \emph{split}, which is a partition of $[N]$ into two non-empty sets, by removing that edge of the tree and observing the resulting partitioning of the leaf labels.
If a split can be obtained by deleting such an edge of a given phylogenetic tree, we say that the tree \emph{displays} that split.
We use a vertical bar ($|$) to denote the division between the two sets of the bipartition.
For example, if we take the unrooted tree with four leaves such that 1 and 2 are sister to one another, the tree displays splits $1|234$, $12|34$, $134|2$, $124|3$, and $123|4$.
Two splits $A|B$ and $C|D$ on the same leaf set are called \emph{compatible} if one of $A \cap C$, $B \cap C$, $A \cap D$, or $B \cap D$ is empty.
A set of splits that are pairwise compatible can be displayed on a phylogenetic tree \citep{Buneman1971-mb}, and in fact the set of pairwise compatible sets of splits is in one-to-one correspondence with the set of (potentially multifurcating) unrooted phylogenetic trees.

When a single branch length goes to zero, $\mathcal{N}(\tau, q)$ will have three elements: $\tau$ itself and its two NNI neighbors.
When multiple branch lengths go to zero, one can re-expand branch lengths for any set of splits that are compatible with each other and with the splits that did not originally go to zero.
This generalizes the NNI condition.
However, the correspondence between the branches that went to zero and the newly expanded branches is no longer obvious.

One can define such a correspondence using a global splits-based coordinate system.
Namely, such a coordinate system can be achieved by indexing branch length vectors by splits, with the proviso that for any two incompatible splits $r$ and $s$, one of $q_r$ or $q_s$ is zero.
We could have used such a coordinate system for this paper, such that branch length vectors $q$ would live in $\mathbb{R}^{2^{N-1}}$.

However, for simplicity of notation, we have indexed the branch lengths (e.g.\ in Algorithm~\ref{alg}) with integers $[n]$ corresponding to the actual branches of a phylogenetic tree.
Thus our branch length vectors $q$ live in $2N-3$ dimensions.
One can use a total order on the splits to unambiguously define which branches map to which others when the HMC crosses a boundary.
We will describe how this works when two branch lengths, $q_i$ and $q_j$, go to zero.
The extension to more branch lengths is clear.

Our branch indices $i, j \in [2N-3]$ are always associated with a phylogenetic tree $\tau$ with numbered edges.
For any branch index $i$ on $\tau$, one can unambiguously take the split $s_i$.
Assume without loss of generality that $s_i < s_j$ in the total order on splits.
Now, when $q_i$ and $q_j$ go to zero, one can transition to a new tree $\tau'$ which may differ from $\tau$ by up to two splits.
We assume without loss of generality that these are actually new splits (if not, we are in a previously defined setting) which we call $s_1'$ and $s_2'$ such that $s_1' < s_2'$.
We carry all of the branch indices for branches that aren't shrinking to zero across to $\tau'$.
Then map branch $i$ in $\tau$ to the branch in $\tau'$ corresponding to the split $s_1'$, and branch $j$ to the branch in $\tau'$ corresponding to the split $s_2'$.
Thus, for example, the momentum $q_i$ in the $\tau$ orthant is carried over to this corresponding $q_i$ in the $\tau'$ orthant.
\end{document}